\documentclass[aps,english,prx,floatfix,amsmath,superscriptaddress,tightenlines,twocolumn,nofootinbib]{revtex4-2}
\usepackage{mathtools, amssymb}
\usepackage{graphicx}
\usepackage{tikz}
\usepackage{tikz-3dplot}
\usetikzlibrary{arrows.meta}
\usetikzlibrary{calc}
\usepackage{intcalc}
\usepackage[caption=false]{subfig}
\usepackage[shortlabels]{enumitem}

\usepackage{multirow}
\usepackage{hhline}
\usepackage{algorithmicx}
\usepackage{algorithm}
\usepackage[noend]{algpseudocode}

\usepackage{soul}
\usepackage[utf8]{inputenc}
\usepackage{xcolor}
\usepackage{amsthm}
\usepackage{float}
\usepackage{comment}
\usepackage{qcircuit}

\usepackage{amsthm}
\usepackage{thmtools}
\declaretheorem[]{theorem}

\declaretheorem[numberwithin=theorem]{corollary}

\makeatletter

\newcommand{\Tr}[0]{\text{Tr}}


\newcommand{\axiomone}[2]
{
	\begin{scope}[xshift=#1cm, yshift=#2cm]
		\draw[line width=0.6pt, fill=red!10!white, opacity=0.5] (0,0) circle (0.3cm);
		\draw[line width=0.6pt] (0,0) circle (0.3cm);
		\draw[line width=0.6pt] (0,0) circle (0.15cm);
	\end{scope}
}
\newcommand{\axiomtwoangled}[4]
{
	\begin{scope}[xshift=#1cm, yshift=#2cm]
		\draw[line width=0.6pt, fill=green!10!white, opacity=0.5] (0,0) circle (0.3cm);
		\draw[line width=0.6pt] (0,0) circle (0.3cm);
		\draw[line width=0.6pt] (0,0) circle (0.15cm);
		\draw[line width=0.6pt] (#3:0.15) -- (#3:0.3);
		\draw[line width=0.6pt] (#4:0.15) -- (#4:0.3);
	\end{scope}
}

\newcommand{\coloredboxhx}[5]{
	\foreach \x in {#2, ..., #4}
	{
		\foreach \y in {#3, ..., #5}
		{
			\fill[#1] (\x-0.5*\y +0.5, \y*0.866 -0.289) -- (\x-0.5*\y +0.5, \y*0.866 +0.289) -- (\x-0.5*\y, \y*0.866+0.577) -- (\x-0.5*\y -0.5, \y*0.866 +0.289) --(\x-0.5*\y-0.5, \y*0.866 -0.289)--(\x-0.5*\y, \y*0.866 -0.577)--(\x-0.5*\y+0.5, \y*0.866 -0.289) -- cycle;
			
			\draw (\x-0.5*\y +0.5, \y*0.866 -0.289) -- (\x-0.5*\y +0.5, \y*0.866 +0.289) -- (\x-0.5*\y, \y*0.866+0.577) -- (\x-0.5*\y -0.5, \y*0.866 +0.289) --(\x-0.5*\y-0.5, \y*0.866 -0.289)--(\x-0.5*\y, \y*0.866 -0.577)--(\x-0.5*\y+0.5, \y*0.866 -0.289) -- cycle;
		}
	}
}

\newcommand{\coloredkv}[4]{
	\begin{scope}[xshift=#3 cm, xshift=-0.5*#4 cm, yshift=0.866*#4 cm, rotate=60*#2]
		\fill[#1] (-30:0.578) -- (30:0.578) -- (90:0.578) -- (150:0.578) --(210:0.578)--(270:0.578) -- cycle;
		
		\draw (-30:0.578) -- (30:0.578) -- (90:0.578) -- (150:0.578) --(210:0.578)--(270:0.578) -- cycle;
	\end{scope}	
}
\newcommand{\coloredke}[4]{
	\begin{scope}[xshift=#3 cm, xshift=-0.5*#4 cm, yshift=0.866*#4 cm, rotate=60*#2]
		\fill[#1] (-30:0.578) -- (30:0.578) -- (90:0.578) -- (150:0.578) --(210:0.578)--(270:0.578) -- cycle;
		\fill[#1,xshift=1 cm] (-30:0.578) -- (30:0.578) -- (90:0.578) -- (150:0.578) --(210:0.578)--(270:0.578) -- cycle;
		
		\draw (30:0.578) -- (90:0.578) -- (150:0.578) --(210:0.578)--(270:0.578)--(330:0.578)--++(-30:0.578)--++(30:0.578)--++(90:0.578)--++(150:0.578)--++(210:0.578)--cycle;
	\end{scope}	
}
\newcommand{\coloredkf}[4]{
	\begin{scope}[xshift=#3 cm, xshift=-0.5*#4 cm, yshift=0.866*#4 cm, rotate=60*#2]
		\fill[#1] (-30:0.578) -- (30:0.578) -- (90:0.578) -- (150:0.578) --(210:0.578)--(270:0.578) -- cycle;
		\fill[#1,xshift=1 cm] (-30:0.578) -- (30:0.578) -- (90:0.578) -- (150:0.578) --(210:0.578)--(270:0.578) -- cycle;
		\fill[#1,xshift=0.5 cm,yshift=0.866 cm] (-30:0.578) -- (30:0.578) -- (90:0.578) -- (150:0.578) --(210:0.578)--(270:0.578) -- cycle;
		
		\draw (90:0.578) -- (150:0.578) --(210:0.578)--(270:0.578)--(330:0.578)--++(-30:0.578)--++(30:0.578)--++(90:0.578)--++(150:0.578)
		--++(90:0.578)--++(150:0.578)--++(210:0.578)--++(270:0.578)--cycle;
	\end{scope}	
}

\newcommand{\diskshadded}[6]
{
	\foreach \x in {#1,...,#3}
	{
		\foreach \y in {#2,...,#4}
		{
			
			\fill[#6] (\x-0.5*\y +0.5, \y*0.866 -0.289) -- (\x-0.5*\y +0.5, \y*0.866 +0.289) -- (\x-0.5*\y, \y*0.866+0.577) -- (\x-0.5*\y -0.5, \y*0.866 +0.289) --(\x-0.5*\y-0.5, \y*0.866 -0.289)--(\x-0.5*\y, \y*0.866 -0.577)--(\x-0.5*\y+0.5, \y*0.866 -0.289) -- cycle;
			
			\draw[color=#5] (\x-0.5*\y +0.5, \y*0.866 -0.289) -- (\x-0.5*\y +0.5, \y*0.866 +0.289) -- (\x-0.5*\y, \y*0.866+0.577) -- (\x-0.5*\y -0.5, \y*0.866 +0.289) --(\x-0.5*\y-0.5, \y*0.866 -0.289)--(\x-0.5*\y, \y*0.866 -0.577)--(\x-0.5*\y+0.5, \y*0.866 -0.289) -- cycle;

		}
	}
}

\usetikzlibrary{spy}

\def\xcolor{blue!50!cyan!80!green!50!white}
\def\ycolor{blue!50!cyan!60!white}
\def\zcolor{blue!60!cyan!75!red!70!white}

\definecolor{maroon}{rgb}{144,12,63}
\definecolor{darkblue}{rgb}{27,12,144}
\definecolor{mypurple2}{RGB}{170,0,255}
\definecolor{myred}{RGB}{255, 0, 90}
\definecolor{mycyan}{RGB}{0, 191, 255}
\definecolor{KK}{RGB}{0, 180, 20}

\usepackage{hyperref}
\hypersetup{colorlinks=true,linkcolor=myred,citecolor=magenta,urlcolor=blue}
\usepackage[all]{hypcap}

\begin{document}
\author{Isaac H. Kim}
\thanks{Equal contribution.}
\affiliation{Department of Computer Science, University of California, Davis, Davis, CA, USA}
\affiliation{Centre for Engineered Quantum Systems, School of Physics, University of Sydney, Sydney, NSW, Australia}
\author{Bowen Shi}
\thanks{Equal contribution.}
\affiliation{Department of Physics, University of California at San Diego, La Jolla, CA, USA}
\author{Kohtaro Kato}
\affiliation{Center for Quantum Information and Quantum Biology, Osaka University, Osaka, Japan}
\affiliation{Graduate School of Informatics, Nagoya University, Nagoya, Japan}
\author{Victor V. Albert}
\affiliation{Joint Center for Quantum Information and Computer Science, National Institute of Standards and Technology and University of Maryland, College Park, MD, USA}
\title{Modular commutator in gapped quantum many-body systems}
\date{\today}

\begin{abstract} 
In \href{https://arxiv.org/abs/2110.06932}{arXiv:2110.06932}, we argued that the chiral central charge --- a topologically protected quantity characterizing the edge theory of a gapped (2+1)-dimensional system --- can be extracted from the bulk by using an order parameter called the modular commutator. In this paper,  we reveal general properties of the modular commutator and strengthen its relationship with the chiral central charge. First, we identify connections between the modular commutator and conditional mutual information, time reversal, and modular flow. Second, we prove, within the framework of the entanglement bootstrap program, that two topologically ordered media connected by a gapped domain wall must have the same modular commutator in their respective bulk. Third, we numerically calculate the value of the modular commutator for a bosonic lattice Laughlin state for finite sizes and extrapolate to the infinite-volume limit. The result of this extrapolation is consistent with the proposed formula up to an error of about $0.7\%$.
\end{abstract}
\maketitle

\section{Introduction}

Two-dimensional quantum many-body systems with a bulk energy gap can host a number of fascinating physical phenomena. The well-known fractional quantum Hall states~\cite{FQHE} can host anyons~\cite{Leinaas1977,Wilczek1982,Arovas1984}, topologically protected ground state degeneracy~\cite{Wen1990}, and long-range entanglement~\cite{Kitaev2006,Levin2006}. Another important aspect of these systems is the presence of chiral (\emph{i.e.}, unidirectional) gapless modes at the edge, which are intimately related to the systems' well-known quantized Hall conductance~\cite{Klitzing1980,FQHE,Thouless1982,Girvin2019}. 

Quantum Hall systems have a $U(1)$ symmetry associated with charge conservation, but protected chiral gapless edge modes can remain even in the absence of such a symmetry~\cite{Kane1997}. The relevant transport coefficient in this context is the thermal Hall conductance, which is related to the energy current at the edge. This current, at a temperature $T$ low compared to the bulk energy gap, is
\begin{equation}
    I=\frac{\pi}{12}c_- T^2, \label{eq:edge_current}
\end{equation}
where $c_-$ is the \emph{chiral central  charge}. While this formula can be derived assuming the edge is described by a conformal field theory, this formula is valid even if all the symmetries are absent~\cite{Kane1997,Kitaev2006a}.

While the chiral central charge is a property of the edge, it is also intimately related to the properties of the bulk. For instance, the energy current at the edge can be related to the energy $2$-current in the bulk, which can be computed from a microscopic Hamiltonian~\cite{Kitaev2006a,Kapustin2020}. In effective field theory approaches, chiral central charge  appears in the gravitational Chern-Simons term of the bulk action, which is responsible for the framing anomaly of the underlying system~\cite{Gromov2015}. This is responsible for the appearance of chiral central chage in the Hall viscosity on a sphere~\cite{Abanov2014,Klevtsov2015,Bradlyn2015,Golan2019}.

Moreover, the following relation, which applies to bosonic topologically ordered systems, is well-known: 
\begin{equation}
    \mathcal{D}^{-1} \sum_a d_a^2 \theta_a = e^{2\pi i c_-/8},\label{eq:bulk_edge}
\end{equation}
where  $d_a$ is the quantum dimension of the superselection sector $a$, $\theta_a$ is its topological spin, and $\mathcal{D}=\sqrt{\sum_a d_a^2}$ is the total quantum dimension~\cite{Frolich1990,Rehren1989}. The left hand side of this equation is determined completely by the low-energy excitations of the bulk whereas the right hand side is the property of the edge. One can view this equation as a manifestation of the bulk-edge correspondence~\cite{Moore1991,Read1992,Read1999,Li2008}.

It is also known that the chiral central charge leaves an imprint in the ground state entanglement. For instance, one can extract the chiral central charge from the entanglement spectrum of the ground state~\cite{Li2008}. Moreover, there are methods to compute the chiral central charge (modulo some constant, \emph{e.g.,} 8) from the ground state wave function, such as the modular $S$- and $T$-matrix~\cite{Zhang2012}, momentum polarization~\cite{Tu2013}, or the Berry curvature of the ground state wave function under adiabatic variation of the metric~\cite{Bradlyn2015}. 

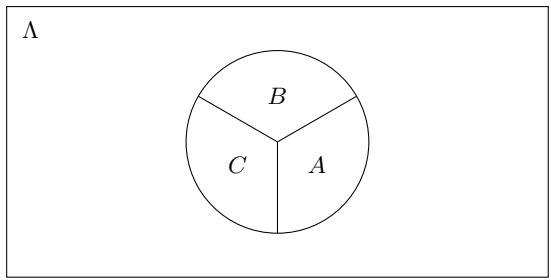
\begin{figure}[h]
    \centering
    \begin{tikzpicture}[scale=0.45]
    \draw[] (-8, -4) -- (8,-4) -- (8,4) -- (-8,4) -- cycle;
    \node[below right] (lambda) at (-7.8,3.8) {$\Lambda$};
    \draw (0,0) circle (2.7);
    \draw (0,0) -- (30:2.7);
    \draw (0,0) -- (150:2.7);
    \draw (0,0) -- (-90:2.7);
    \node at (90:1.35) {$B$};
    \node at (-150:1.35) {$C$};
    \node at (-30:1.35) {$A$};
    \end{tikzpicture}
    \caption{Partition of a disk-shaped region $ABC$ in the bulk ($\Lambda$). Each subsystem is assumed to be sufficiently large compared to the correlation length.}
    \label{fig:abc}
\end{figure}

Recently, we derived a new formula for the chiral central charge that can be obtained from a single ground state wave function~\cite{Short}. Our formula is defined in terms of the \emph{modular Hamiltonian}~\cite{Haag2012} (also known as the entanglement Hamiltonian). Let $\sigma = |\psi\rangle\!\langle \psi|$ be a many-body ground state and $\sigma_A$ be its reduced density matrix over a subsystem $A$. The modular Hamiltonian $K_A$ associated with the region $A$ is $K_A = -\ln \sigma_A$. Our formula for the chiral central charge reads:
\begin{equation}
     i\langle[K_{AB}, K_{BC}] \rangle=\frac{\pi}{3} c_-, ~\label{eq:main_result}
\end{equation}
for a set of subsystems depicted in Fig.~\ref{fig:abc}, where $\langle \ldots \rangle$ is the expectation value over the ground state. Eq.~\eqref{eq:main_result} is insensitive to continuous deformation of the subsystems, so long as they remain to partition a disk. 

There are several remarkable aspects about Eq.~\eqref{eq:main_result}. The formula depends only on a single wave function, instead of a family of wave functions~\cite{Tu2013,Bradlyn2015} or the entire ground state subspace~\cite{Zhang2012}. Moreover, the quantity that we use to extract the chiral central charge is not a known measure of entanglement, to the best of our knowledge. These facts motivate us to study the properties of the modular commutator and its relations to the ground state entanglement of gapped quantum many-body systems. We initiate these studies in this manuscript.

Our contribution can be divided into roughly two categories. First, we explore various properties of the modular commutator, including its relation to conditional mutual information~\cite{Lieb1973}, time-reversal, and modular flow~\cite{Haag2012}. These are results that hold for any many-body quantum states, as long as the underlying Hilbert space has a tensor product structure. That a single quantity has relations to these seemingly disparate subjects suggest that the modular commutator may be useful more broardly.

Second, we present further evidence for the validity of Eq.~\eqref{eq:main_result} in the form of a nontrivial consistency check for systems consisting of two different gapped phases sharing a domain wall. In such cases, one expects the two phases to admit the same chiral central charge. Using a part of the entanglement bootstrap program~\cite{SKK2019} dealing with gapped domain walls~\cite{Shi2021}, we show that such a pair of phases has the same value of the modular commutator.

Another part of our evidence comes from a numerical investigation. We compute Eq.~\eqref{eq:main_result} for wave functions associated with a topological phase with a known value of chiral central charge. Specifically, we study a bosonic lattice Laughlin state proposed in Ref.~\cite{Nielsen2012}. Due to finite-size effects, the computed value of Eq.~\eqref{eq:main_result} deviates from the predicted value with a relative error of $\sim 10\%$. We carefully analyze the scaling behavior of the chiral central charge, which agrees well with an exponenially decaying function in the length scale of the subsystems used in Eq.~\eqref{eq:main_result}. By fitting the numerically obtained data to this ansatz and extrapolating to the thermodynamic limit, we obtain a relative error of $0.7\%$. 

The rest of the paper is organized as follows. In Section~\ref{sec:modular_commutator}, we discuss general properties of the commutator of modular Hamiltonians. In Section~\ref{sec:area_law}, we discuss the entanglement property of the gapped ground state wave functions, setting up the discussion in the rest of the paper. In Section~\ref{sec:locality_modular_operator}, we prove that the modular Hamiltonian in gapped systems is a sum of local terms. In Section~\ref{sec:modular_current}, we define the modular current and study its properties. We also review an argument relating edge energy current and the modular current calculated from ground state wave function~\cite{Short}. In Section~\ref{sec:gapped_domain_wall}, we prove that the modular commutator calculated from both sides of a gapped domain wall are identical, further supporting our formula for chiral central charge. In Section~\ref{sec:numerical_evidence} we discuss our numerical experiment. We end with a discussion in Section~\ref{sec:discussion}.

\section{Modular commutator}
\label{sec:modular_commutator}

In this paper, we study the expectation value of the commutator of modular Hamiltonians, which we refer to as the \emph{modular commutator}:
\begin{equation}
    J(A,B,C)_{\rho} := i\text{Tr}\left(\rho_{ABC}[K_{AB}(\rho), K_{BC}(\rho)] \right).
    \label{eq:mod_comm_definition}
\end{equation}
Here, the modular Hamiltonian of $\rho$ on subsystem $X$ is denoted as $K_{X}(\rho)= -\ln \rho_{X}$. For simplicity, we have suppressed the tensor product with the identity operator. For instance, $K_{AB}(\rho)$ in Eq.~\eqref{eq:mod_comm_definition} means $K_{AB}(\rho)\otimes 1_C$; we shall drop the label $\rho$ whenever this is obvious from the context.
We shall assume throughout this paper that the tripartite state $\rho_{ABC}$ lives on a finite-dimensional Hilbert space with a tensor product structure $\mathcal{H}_{ABC}=\mathcal{H}_{A}\otimes \mathcal{H}_{B} \otimes \mathcal{H}_{C}$. Here we discuss general properties of the modular commutator, deriving how it behaves under time-reversal, and making contact with quantum Markov chains~\cite{Petz2003} and modular flow~\cite{Haag2012}.

Let us first begin with three elementary properties. First, the modular commutator is bounded, assuming that the Hilbert space associated with $A, B,$ and $C$ are all finite-dimensional. 
(This is a somewhat nontrivial statement because, for density matrices with zero eigenvalues, the modular Hamiltonian can diverge due to the logarithm.)
To see why, it is helpful to rewrite the modular commutator as follows:
\begin{equation}
    J(A,B,C)_{\rho} =- 2\text{Im}(\langle \widetilde{\psi}_{AB} |\widetilde{\psi}_{BC}\rangle),
\end{equation}
where
\begin{subequations}
	\begin{align}
		|\widetilde{\psi}_X\rangle &= K_{X}  \vert \psi_{ABCD}\rangle\\
		&= - \sum_i \sqrt{\lambda_i} \ln \lambda_i |i_X\rangle \otimes |i_{\bar{X}}\rangle.
	\end{align} 
\end{subequations}
Here $\bar{X}$ is the complement of $X$ on systems $ABCD$, where $D$ is a finite-dimensional Hilbert space which purifies $\rho_{ABC}$ into $\vert \psi_{ABCD} \rangle$; $\{\lambda_i\}$ is the set of eigenvalues of the reduced density matrix over $X$, and $\{|i_X \rangle \}$ and $\{|i_{\bar{X}}\rangle \}$ are two orthonormal sets of vectors. Note that the function $\sqrt{\lambda} \ln \lambda \to 0$ as $\lambda \to 0$. Therefore, the norm of $|\widetilde{\psi}_X\rangle$ is bounded for any $X$. In particular, we see that $J(A,B,C)_{\rho}$ is bounded as well.

Secondly, because $[K_{AB}, K_{BC}]$ is an antihermitian operator, $J(A,B,C)$ is real. The modular commutator is an intrinsically quantum quantity: $[K_{AB}, K_{BC}]$ is zero for classical states (\textit{i.e.}, states that are diagonal in the computational basis), and therefore so is $J$.

Thirdly, modular commutator is additive under tensor product. Namely, for state $\rho_{ABC}\otimes \lambda_{A'B'C'}$, we have
\begin{equation}\label{eq:additive}
	J(AA',BB',CC' )_{\rho\otimes \lambda } = J(A,B,C)_{\rho} + J(A',B',C')_{\lambda}.
\end{equation}

\subsection{Time reversal}

An important property of $J(A,B,C)$ is the fact that it is odd under bosonic time-reversal operation on the quantum state. Consider an antilinear map $*: \rho \to \rho^*$, where the entries of $\rho^*$ over a product basis over $A$, $B$, and $C$ are complex conjugates of the entries of $\rho$. One can show that
\begin{equation}
    \begin{aligned}
    J(A,B,C)_{\rho^{*}}&= i\text{Tr} (\rho_{ABC}^* [K_{AB}^{*}, K_{BC}^{*}] )\\
    &= i\left(\text{Tr}(\rho_{ABC} [K_{AB}, K_{BC}]) \right)^* \\
    &= i(-iJ(A,B,C)_{\rho})^* .
    \end{aligned}
\end{equation}
Recalling that $J$ is real, we obtain
\begin{equation}
\boxed{
J(A,B,C)_{\rho^{*}} = -J(A,B,C)_{\rho}.}
    \label{eq:odd_under_trs}
\end{equation}
A simple corollary is that $J(A,B,C)=0$ for any density matrix with real entries in some product basis over $A, B,$ and $C$.

An important implication of Eq.~\eqref{eq:odd_under_trs} is that $J(A,B,C)$ is zero for any reduced density matrix of a Gibbs state of a stoquastic Hamiltonian~\cite{Bravyi2006}. These are Hamiltonians which have nonnegative off-diagonal entries in the product basis over the local degrees of freedom, a famous example being the toric code Hamiltonian~\cite{Kitaev2003}. Therefore, in order for $J(A,B,C)$ to be nonzero, one must necessarily have a density matrix with complex-valued entries. However, the presence of complex entries is not sufficient. The modular commutator is invariant under the unitary conjugation
\begin{equation}
	\rho_{ABC} \to( U_A\otimes U_B\otimes U_C) \rho_{ABC} (U_A^{\dagger}\otimes U_B^{\dagger}\otimes U_C^{\dagger}), \nonumber
\end{equation}
where $U_A, U_B,$ and $U_C$ are unitary operators acting on $A, B,$ and $C$ respectively, and any states whose complex entries can be removed via such a conjugation must thus have zero $J(A,B,C)$. In this way, one can easily produce a density matrix with complex entries which nevertheless have a zero modular commutator.

\subsection{Quantum Markov chain}

There is a sense in which $J(A,B,C)$ detects a genuine tripartite correlation between $A,B,$ and $C$. This is because $J(A,B,C)=0$ if one of the subsystems is an empty set. This is trivially true if $B=\varnothing$ because $[K_{AB}, K_{BC}] = [K_A , K_C]=0$. If $A=\varnothing$, 
\begin{equation}
    \begin{aligned}
    J(\varnothing,B,C)_{\rho} &=i \text{Tr}(\rho_{BC}([K_B, K_{BC}]))\\
    &=i \text{Tr}(K_{BC}\rho_{BC}K_B) - i \text{Tr}(\rho_{BC} K_{BC} K_B) \\
    &= 0,
    \end{aligned}
\end{equation}
where in the second line we used the cyclicity of the trace and in the third line we used the fact that $\rho_{BC}$ and $K_{BC}$ commute with each other. Similarly, the same argument can be used to prove $J(A,B, \varnothing)=0$. To summarize, we have
\begin{equation}
    J(\varnothing, B, C)_{\rho}=     J(A, \varnothing, C)_{\rho}=     J(A, B, \varnothing)_{\rho}=0.
    \label{eq:zero_if_emptyset}
\end{equation}
Moreover, if $\rho_{ABC}$ forms a product state over any partition, $J(A,B,C)=0$. For instance, suppose $\rho_{ABC} = \rho_A \otimes \rho_{BC}$. Then $K_{AB} = K_A + K_B$, which leads to $J(A,B,C)=0$.  Other cases, \emph{e.g.,} $\rho_{ABC} = \rho_{AB}\otimes \rho_C$ and $\rho_{ABC} = \rho_B \otimes \rho_{AC}$, also lead to $J(A,B,C)=0$.

In fact, there is a more general set of conditions under which $J(A,B,C)$ is equal to $0$. These conditions can be concisely phrased in terms of the \emph{conditional mutual information}, defined as $I(A:C|B)_{\rho} := S(\rho_{AB}) + S(\rho_{BC}) - S(\rho_B)-S(\rho_{ABC})$, where $S(\rho):= -\text{Tr}(\rho \ln \rho)$ is the von Neumann entropy of $\rho$. It turns out that if $I(A:C|B)_{\rho}=0$ then $J(A,B,C)_{\rho}=0$. This fact can be proved easily by noting that~\cite{Petz2003}
\begin{equation}
    I(A:C|B)_{\rho}=0 \iff K_{AB} + K_{BC} - K_B - K_{ABC}=0 \label{eq:local_decomposition}
\end{equation}
for positive definite $\rho_{ABC}$.\footnote{If $\rho_{ABC}$ has zero eigenvalues, this equation means that it holds on the subspace spanned by the eigenstates of the nonzero eigenvalues of $\rho_{ABC}$.} Note that
\begin{equation}
\begin{aligned}
    J(A,B,C)_{\rho} &= \text{Tr}(\rho_{ABC}[K_{AB}, K_{BC}] ) \\
    &= \text{Tr}(\rho_{ABC}[K_{AB}, K_{ABC} +K_B - K_{AB}] ) \\
    &= \text{Tr}(\rho_{ABC}[K_{AB}, K_{ABC} +K_B] ) \\
    &=0
\end{aligned}
\end{equation}
because both $\Tr(\rho_{ABC}[K_{AB}, K_{ABC}])$ and $\Tr(\rho_{ABC}[K_{AB}, K_B])$ are zero for any density matrix $\rho_{ABC}$. Therefore, we conclude
\begin{equation}
\boxed{    I(A:C|B)_{\rho}=0 \implies J(A,B,C)_{\rho}=0.}
\label{eq:cmi_Jabc}
\end{equation}

Let us emphasize that Eq.~\eqref{eq:cmi_Jabc} holds for both bosons and fermions. This is because the condition $I(A:C|B)_{\rho}=0$ for fermions also implies $K_{ABC}=K_{AB} + K_{BC} - K_B - K_{ABC}$~\cite{Short}. 

\subsection{Modular flow}

One way to interpret $J(A,B,C)$ is to view it as a response of entanglement entropy under the modular flow. The modular Hamiltonian generates a modular flow, an automorphism on the algebra of operators acting on $A$:
\begin{equation}
O\to    O(s) = e^{iK_As}Oe^{-iK_As}. \label{eq:modular_flow}
\end{equation}
The modular flow played an important role in the development of Tomita-Takesaki theory~\cite{Haag2012}. The modular flow generated by $K_{BC}$, applied to a state, yields a one-parameter family of states  $\rho_{ABC}(s) := e^{-iK_{BC}s} \rho_{ABC} \, e^{iK_{BC}s}$. One can verify that
\begin{equation}
\begin{aligned}
\boxed{
    \frac{d}{ds}S(\rho_{AB}(s))\rvert_{s=0} =   J(A,B,C)_{\rho}. }
    \label{eq:entropy_under_flow}
\end{aligned}
\end{equation}
We leave the proof in Appendix~\ref{sec:appSJ}. 

\section{Area law}
\label{sec:area_law}

The discussion we had about $J(A,B,C)$ so far pertains to any quantum state. Now we shift our focus to a more special family of physical systems: gapped quantum many-body systems in two spatial dimensions. The ground state of a gapped system, which we denote as $\sigma$, is expected to obey the area law of entanglement entropy~\cite{Kitaev2006,Levin2006}. Area law means that the  following equation holds for any disk-shaped region $A$:
\begin{equation}
    S(\sigma_A) = \alpha |\partial A| - \gamma + \ldots, \label{eq:area_law}
\end{equation}
where $|\partial A|$ is the size of the perimeter of $A$, $\alpha$ is a non-universal constant, and $\gamma$ is the topological entanglement entropy~\cite{Kitaev2006,Levin2006}. The remaining term is expected to vanish in the $|A|\to \infty$ limit. 

For the set of regions depicted in Fig.~\ref{fig:abc}, the topological entanglement entropy~\cite{Kitaev2006,Levin2006} can be extracted as 
\begin{equation}\label{eq:teedef}
\gamma=\langle K_{AB}+K_{BC}+K_{CA}-K_A-K_B-K_C-K_{ABC}\rangle,  
\end{equation}
which consists of terms that are linear in the modular Hamiltonians. On the other hand, Eq.~\eqref{eq:main_result} is \emph{quadratic} in the modular Hamiltonian. There is, nevertheless, a connection between the entanglement entropy and the modular commutator, in the sense that Eq.~\eqref{eq:area_law} implies a nontrivial set of identities for the modular Hamiltonians, which subsequently constrains the properties of the modular commutator.

Recall that the vanishing conditional mutual information implies a nontrivial linear relation between the modular commutators (Eq.~\eqref{eq:local_decomposition}). Thus, by deriving $I(A:C|B)_{\sigma}=0$ for a judiciously chosen set of subsystems $A, B,$ and $C$ from Eq.~\eqref{eq:area_law}, we can derive nontrivial identities between modular commutators. We explain how this works in detail below.

Under Eq.~\eqref{eq:area_law}, the entanglement entropy obeys the following equation~\cite{SKK2019}:
\begin{equation} \label{eq:axioms}
    \begin{aligned}
    (S_{BC} + S_{CD} - S_B - S_D)_{\sigma}=0 
    \end{aligned}
\end{equation}
for regions which are topologically equivalent to the ones depicted in Fig.~\ref{fig:axioms}. Here $(S_X)_{\sigma}$ is a shorthand notation for $S(\sigma_X)$. (This entropy condition (\ref{eq:axioms}) is referred to as axiom \textbf{A1} in Ref.~\cite{SKK2019}.) This is the starting point of our derivation.
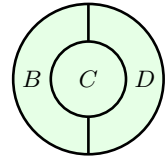
\begin{figure}[h]
    \centering
    \begin{tikzpicture}
	\begin{scope}[xshift=4.3cm]
	\draw[fill=green!10!white, line width=1pt] (0,0) circle (1);
	\draw[fill=green!10!white, line width=1pt] (0,0) circle (0.5);
	\node[] (B) at (-0.75,0) {$B$};
	\node[] (D) at (0.75,0) {$D$};
	\node[] (C) at (0,0) {$C$};
	\draw[line width=1pt] (0,0.5) -- (0,1);
	\draw[line width=1pt] (0,-0.5)--(0,-1);
	\end{scope}
    \end{tikzpicture}
    \caption{Axiom {\bf A1}: $(S_{BC} + S_{CD} - S_B - S_D)_{\sigma}=0$.}
    \label{fig:axioms}
\end{figure}

An important implication of Eq.~\eqref{eq:axioms} is that one can prove $I(A:C|B)_{\sigma}=0$ for \emph{any} $A$ which has no nontrivial overlap with $BCD$, \emph{i.e.}, $A\subset \Lambda \setminus (BCD)$. Let $E$ be the purifying space of $BCD$. Physically, $E$ includes the rest of the physical system that constitutes the two-dimensional quantum many-body system as well as the abstract ``environment'' that purifies $\sigma$. By using the purity of this global state, we conclude
\begin{equation}
    (S_{BC}+S_{CD} - S_B - S_D)_{\sigma} = I(E:C|B)_{|\psi\rangle_{BCDE}},
\end{equation}
where $|\psi\rangle_{BCDE}$ is the purified state. Clearly, $A$ is a subset of $E$. Therefore, 
\begin{equation}
    I(A:C|B)_{|\psi\rangle_{BCDE}} \leq I(E:C|D)_{|\psi\rangle_{BCDE}}.
\end{equation}
by using the strong subadditivity of entropy~\cite{Lieb1973}. Because the right hand side is $0$, again by the strong subadditivity of entropy, the left hand side must be zero as well. Since the reduced density matrix of $|\psi\rangle$ over $ABC$ is independent of the purification, we can conclude $I(A:C|B)_{\sigma}=0$. Let us summarize this finding as follows:
\begin{equation}
  (S_{BC} + S_{CD} - S_B - S_D)_{\sigma}=0 \implies I(A:C|B)_{\sigma}=0.
\label{eq:ssa_saturation}
\end{equation}
Furthermore, from Eq.~\eqref{eq:local_decomposition}, we can also conclude that 
\begin{equation}
    K_{AB} + K_{BC} - K_B - K_{ABC} = 0.
\end{equation}

\subsection{Topological invariance}
\label{sec:topological_invariance}

By using the modular Hamiltonain identities discussed above,  we can show that the modular commutator for the tripartition of a disk shown in Fig.~\ref{fig:abc} is a topological invariant. In other words,
\begin{equation}\label{eq:inv_copy}
		J(A,B,C)_{\sigma}\quad\text{on a partition of the form}\quad    \tikz[scale=0.2,baseline=0ex]{	
			\draw (0,0) circle (2.7);
			\draw (0,0) -- (30:2.7);
			\draw (0,0) -- (150:2.7);
			\draw (0,0) -- (-90:2.7);
			\node at (90:1.35) {\footnotesize{$B$}};
			\node at (-150:1.35) {\footnotesize{$C$}};
			\node at (-30:1.35) {\footnotesize{$A$}};
		}~
	\end{equation}
is invariant under smooth deformations of $A$, $B$, or $C$ as long as the topology is preserved.
(What follows is a review of Ref.~\cite{Short}, supplemented with more details.) To prove this statement, we shall consider small deformations of the subsystems $A$, $B$ and $C$ of the disk partition whilst keeping their topologies fixed. 

First we show that deformations away from $B$ leaves the modular commutator $J(A,B,C)_{\sigma}$ invariant.
Without loss of generality, we consider the two types of deformation of $A$, shown in Fig.~\ref{fig:topo_invariance_1}. From the area law Eq.~(\ref{eq:area_law}), we obtain
\begin{equation}\label{eq:inv_11}
	I(a:B\vert A)_{\sigma}=0.
\end{equation} 
The fact that this conditional mutual information is zero is equivalent to the fact that $K_{aAB} = K_{AB} + K_{aA} - K_{A}$; see Eq.~\eqref{eq:local_decomposition}, Thus we conclude 
\begin{equation}\label{eq:inv_12}
	\begin{aligned}
		&\,\,\,\,\,\,\,J(Aa,B,C)_{\sigma}\\
		&=i\langle[K_{aAB},K_{BC}]\rangle\\
		&=J(A,B,C)_{\sigma}+i\langle[K_{aA},K_{BC}]\rangle-i\langle[K_{A},K_{BC}]\rangle\,\\
		&=J(A,B,C)_{\sigma}.
	\end{aligned}	
\end{equation}
This argument also applies to the deformation of $C$ (instead of $A$).
\begin{figure}[h]
	\begin{tikzpicture}
		\begin{scope}[scale=0.3]
			\begin{scope}
				\fill[yellow, opacity=0.2] (-30:3.1) circle (0.75);
				\draw(-30:3.1) circle (0.75);
				
				\draw[fill=white] (0,0) circle (3);
				\node at (90:1.5) {$B$};
				\node at (-30:1.5) {$A$};
				\node at (-150:1.5) {$C$};
				\draw (0:0)--(30:3);
				\draw (0:0)--(150:3);
				\draw (0:0)--(-90:3);
				\node (a) at (-30:3.375) {$a$};
			\end{scope}	
			\begin{scope}[xshift=12 cm]
				\fill[yellow, opacity=0.2] (-90:3.1) circle (0.75);
				\draw(-90:3.1) circle (0.75);
				
				\draw[fill=white] (0,0) circle (3);
				\node at (90:1.5) {$B$};
				\node at (-30:1.5) {$A$};
				\node at (-150:1.5) {$C$};
				\draw (0:0)--(30:3);
				\draw (0:0)--(150:3);
				\draw (0:0)--(-90:3);
				\node (a) at (-90:3.378) {$a$};
			\end{scope}	
		\end{scope}	
	\end{tikzpicture}
	\caption{Subsystems involved in the proof of invariance against deformation of $A$ and $C$ while keeping the other subsystems ($BC$ and $AB$, respectively) intact. Here, $a \subset \Lambda\setminus (ABC)$. \label{fig:topo_invariance_1}}
\end{figure}
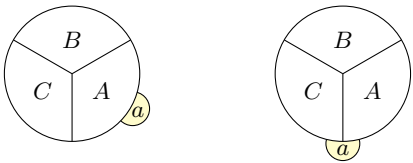

Second, we show that deformations of $B$ away from $A$ and $C$ leaves the modular commutator  $J(A,B,C)_{\sigma}$ invariant. This deformation is illustrated in Fig.~\ref{fig:topo_invariance_2}. For this case, the crucial facts are the following two quantum Markov chain conditions:
\begin{equation}\label{eq:inv_21}
	\begin{aligned}
		I(b:A\vert B)_{\sigma}=0,\\
		I(b:C\vert B)_{\sigma}=0.
	\end{aligned}
\end{equation}
Using an analysis similar to the one that leads to Eq.~(\ref{eq:inv_12}), we obtain
\begin{equation}\label{eq:inv_22}
	\begin{aligned}
		&\,\,\,\,\,\,\,J(A,Bb,C)_{\sigma}\\
		&=i\langle[K_{ABb},K_{BbC}]\rangle\\
		&=i\langle[K_{AB}+K_{Bb}-K_b,K_{BC}+K_{Bb}-K_b]\rangle\\
		&=J(A,B,C)_{\sigma}+i\langle[K_{AB},K_{Bb}-K_b]+[K_{Bb}-K_b,K_{BC}]\rangle\\
		&=J(A,B,C)_{\sigma},
	\end{aligned}	
\end{equation}
where we used $\langle [K_{AB},K_{Bb}] \rangle= \langle [K_{Bb},K_{BC}]\rangle =0$.
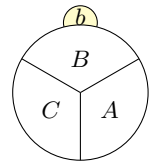
\begin{figure}[h]
	\begin{tikzpicture}
		\begin{scope}[scale=0.30]
			\begin{scope}
				\fill[yellow, opacity=0.2] (90:3.1) circle (0.75);
				\draw(90:3.1) circle (0.75);
				
				\draw[fill=white] (0,0) circle (3);
				\node at (90:1.5) {$B$};
				\node at (-30:1.5) {$A$};
				\node at (-150:1.5) {$C$};
				\draw (0:0)--(30:3);
				\draw (0:0)--(150:3);
				\draw (0:0)--(-90:3);
				\node (a) at (90:3.375) {$b$};
			\end{scope}	
		\end{scope}	
	\end{tikzpicture}
	\caption{Subsystems involved in the proof of invariance against deformation of $B$, while keeping $A$ and $C$ intact. Here $b \subset \Lambda\setminus (ABC)$ and $b$ is away from both $A$ and $C$. \label{fig:topo_invariance_2}}
\end{figure} 

Now there are two remaining cases, described in Fig.~\ref{fig:topo_invariance_3}. The proofs for these cases are less straightforward because the ``obvious'' choice of partitions do not form quantum Markov chains in general, \emph{e.g.,}  $I(b:A \vert B)_{\sigma}\ne 0$ and $I(b:C \vert B)_{\sigma}\ne 0$, for the left hand the right figure of Fig.~\ref{fig:topo_invariance_3}, respectively.

\begin{figure}[h]
	\begin{tikzpicture}
		\begin{scope}[scale=0.30]
			\begin{scope}
				\fill[yellow, opacity=0.2] (30:3.1) circle (0.75);
				\draw(30:3.1) circle (0.75);
				
				\draw[fill=white] (0,0) circle (3);
				\node at (90:1.5) {$B$};
				\node at (-30:1.5) {$A$};
				\node at (-150:1.5) {$C$};
				\draw (0:0)--(30:3);
				\draw (0:0)--(150:3);
				\draw (0:0)--(-90:3);
				\node (a) at (30:3.375) {$b$};
			\end{scope}	
			\begin{scope}[xshift=12 cm]
				\fill[yellow, opacity=0.2] (-30:0.1) circle (0.9);
				\draw (-30:0.1) circle (0.9);
				\fill[white] (0:0)--(30:2)--(150:2)--(-90:2)--(0:0)--cycle;
				
				\draw (0,0) circle (3);
				\node at (90:1.5) {$B$};
				\node at (-30:1.5) {$A$};
				\node at (-150:1.5) {$C$};
				\draw (0:0)--(30:3);
				\draw (0:0)--(150:3);
				\draw (0:0)--(-90:3);
				\node (a) at (-34:0.5) {$b$};
			\end{scope}	
		\end{scope}
	\end{tikzpicture}
	\caption{The last case for proving the topological invariance. (Left) $b \subset \Lambda\setminus (ABC)$ and $b$ is away from $C$. (Right) $b \subset A$, and $b$ is away from $D$ in the sense that it is contained in the interior of $ABC$.} \label{fig:topo_invariance_3}
\end{figure}
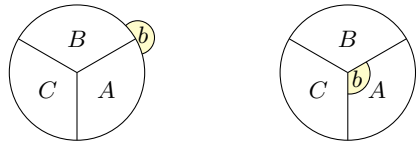

Below we solve these cases by using a quantum Markov chain involving a region $D$; here, $D$ is the complement of $ABC$ on system $\Lambda E$, where $E$ is a (finite dimensional) purifying system, such that $\sigma_{\Lambda}$ is purified to $\vert \psi_{ABCD}\rangle$. For each case shown in Fig.~\ref{fig:topo_invariance_3}, we have
\begin{equation}
	J(A,B,C)_{\vert \psi \rangle} = J(D,C,B)_{\vert \psi \rangle}
\end{equation}
which replaces the middle entry of modular commutator $B$ by $C$, so that the middle entry is now away from the place of deformation for the left figure. Similarly,
\begin{equation}
	J(A,B,C)_{\vert \psi \rangle} = J(C,D,A)_{\vert \psi \rangle},
\end{equation}
which replaces the middle entry $B$ by $D$, so that  the middle entry is now away from the place of deformation for the right figure. Now the problem can be solved in the same manner as the previous cases.
With the quantum Markov chain structure $I(b:C \vert D\setminus b)_{\vert \psi \rangle}=I(b:C|B)_{\vert \psi\rangle}=0$ for the left figure and $I(b:D \vert A \setminus b)_{\vert \psi \rangle}=0$ for the right figure, we prove the desired invariant property of $J(A,B,C)_{\sigma}$.

To conclude, given subsystems $A,B,$ and $C$ which are topologically equivalent to those in Fig.~\ref{fig:abc}, $J(A,B,C)_{\sigma}$ is invariant under any smooth deformation of $A,B,$ or $C$ that preserves the topology. (Note that $J(A,B,C)_{\sigma}$ is precisely the quantity that appears on the left hand side of Eq.~(\ref{eq:main_result}).)

\subsection{Non-orientable surface}

An immediate consequence of this topological invariance is that if the ground state is defined on a non-orientable surface ($\Lambda$ is non-orientable), then the modular commutator vanishes, \emph{i.e.,}
\begin{equation}\label{eq:inv_nonorientable}
	J(A,B,C)_{\sigma}=0\quad\text{on a face of the form}\quad    \tikz[scale=0.2,baseline=0ex]{	
		\draw (0,0) circle (2.7);
		\draw (0,0) -- (30:2.7);
		\draw (0,0) -- (150:2.7);
		\draw (0,0) -- (-90:2.7);
		\node at (90:1.35) {\footnotesize{$B$}};
		\node at (-150:1.35) {\footnotesize{$C$}};
		\node at (-30:1.35) {\footnotesize{$A$}};
	}~.
\end{equation}
(We note in passing that the following argument does not make use of the fact that $\sigma$ is a ground state. The only assumption used is the invariance of $J(A,B,C)_{\sigma}$ under continuous deformation of the subsystems.)  
This is because, on a non-orientable surface, one can smoothly deform the disk $ABC$ along a M\"{o}bius strip\footnote{To achieve this deformation, it is important to have axiom {\bf A1} holds on a set of bounded-radius disks which covers the M\"{o}bius strip.} such that, when it comes back, $A$ and $C$ are switched. The topological invariance then says $J(A,B,C)_{\sigma}=J(C,B,A)_{\sigma}$. On the other hand, $J(A,B,C)=-J(C,B,A)$ for any state. This proves Eq.~(\ref{eq:inv_nonorientable}).

\section{Locality of modular Hamiltonian}
\label{sec:locality_modular_operator}
Modular Hamiltonians are generally nonlocal, but they can be local in certain special cases. In any quantum field theory with Lorentz symmetry, $K_A$ is local if $A$ forms a half-space~\cite{BW1975,BW1976}. In conformal field theories, $K_A$ is local if $A$ is a disk~\cite{Casini2011}. The systems considered in this paper generally do not possess any such symmetries. As such, there seems no a priori reason to believe that the modular Hamiltonians in our setup would have a local structure. 

Nevertheless, it is possible to show that the modular Hamiltonians can be local (even without assuming any symmetry), just from Eq.~\eqref{eq:area_law}. For the derivation of Eq.~\eqref{eq:main_result}, the decomposition of the modular Hamiltonian on a disk will play an important role and this will be our primary focus. In fact, the argument is more general and can be applied to a disk containing an anyon (as opposed to a modular Hamiltonian obtained from the vacuum).

Whether a local decomposition exists or not depends greatly on the topology of the subsystem. For instance, If the subsystem is an annulus, we shall see that the modular Hamiltonian generally does not admit a local decomposition. 

For the ensuing discussion, it will be convenient to consider a sufficiently coarse-grained lattice, so that the correction term in Eq.~\eqref{eq:area_law} vanishes. Then we can apply an argument in Eq.~\eqref{eq:ssa_saturation} to conclude that certain conditional mutual information vanishes. (For example, $I(X:Z|Y)_{\sigma}=0$ would hold for the partition described in  Fig.~\ref{fig:lattice_partition}.) This implies --- via Eq.~\eqref{eq:local_decomposition} --- that the modular Hamiltonian can be broken down into a linear combination of modular Hamiltonians acting on smaller subsystems. By repeating this argument, we can decompose the modular Hamiltonian into a sum of operators each of which are localized.

\begin{figure}[h]
	\centering
	\begin{tikzpicture}
			\begin{scope}[scale=0.45]
		\begin{scope}
			\clip (2+0.25,2*0.866-0.433) rectangle (10-0.5*2+0.25, 10*0.866-0.433);
			\diskshadded{3}{2}{10}{2}{gray!60!white}{white};
			\diskshadded{4}{3}{10}{3}{gray!60!white}{white};
			\diskshadded{5}{4}{11}{4}{gray!60!white}{white};
			\diskshadded{5}{5}{11}{5}{gray!60!white}{white};
			\diskshadded{6}{6}{12}{6}{gray!60!white}{white};
			\diskshadded{6}{7}{12}{7}{gray!60!white}{white};
			\diskshadded{7}{8}{13}{8}{gray!60!white}{white};
			\diskshadded{7}{9}{14}{9}{gray!60!white}{white};
		\end{scope}
		\draw[] (2+0.25,2*0.866-0.433) rectangle (10-0.5*2+0.25, 10*0.866-0.433);
		\coloredboxhx{\zcolor,opacity=0.4}{9}{5}{9}{5};
		\coloredboxhx{\ycolor,opacity=0.4}{8}{5}{8}{5};
		\coloredboxhx{\xcolor,opacity=0.4}{8}{6}{8}{6};
	\end{scope}	
	\begin{scope}[xshift=4.8 cm, yshift=1.0 cm]
		\draw[fill=\xcolor, fill opacity=0.4] (3.5-3,1.3) rectangle (0.8, 1.6);
		\draw[fill=\ycolor, fill opacity=0.4] (3.5-3,0.9) rectangle (0.8, 1.2);
		\draw[fill=\zcolor, fill opacity=0.4] (3.5-3,0.5) rectangle (0.8, 0.8);
		\node [] (A) at (4.2-3,1.45) {$:X$};
		\node [] (B) at (4.2-3,1.05) {$:Y$};
		\node [] (C) at (4.2-3,0.65) {$:Z$};
	\end{scope}
\end{tikzpicture}
	\caption{Upon coarse-graining the local degrees of freedom, one obtains a triangular lattice of ``supersites''. (What is shown here is the dual of the triangular lattice.)  One can derive, using Eq.~\eqref{eq:ssa_saturation}, that  $I(X:Z|Y)_{\sigma}=0$.}
	\label{fig:lattice_partition}
\end{figure}
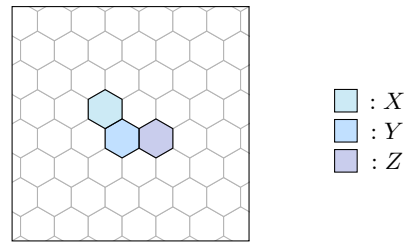

To make our discussion concrete, we set up our convention. Without loss of generality, partition the system into hexagonal cells, each containing a coarse-grained degree of freedom. We can represent each of these cells as a vertex of the dual lattice --- a triangular lattice. 

For the ensuing discussion, we shall define $F, E,$ and $V$ as the sets of faces, edges, and vertices of the triangular lattice.
We represent the elements of these sets as $f, e,$ and $v$. Let $A\subseteq V$ be a set of vertices. Define $A_{\partial} := \{v\in A: \exists (v,u)\in E \text{ s.t. }  u\not\in A\}$ and $A_{\text{int}} := A\setminus A_{\partial}$; see Fig.~\ref{fig:local_terms}(a).  Lastly, let $F(A)= \{(v,u,w)\in F: v,u,w\in A  \}$, and $E(A) = \{(u,v)\in E: u,v\in A \}$, and $N(v):= \{u\in V: (u,v) \in E \}$.

\subsection{Disk}
Let us first show that the modular Hamiltonian over a disk is local. The basic idea is to eliminate one site at a time from the disk, starting from its boundary. Specifically, for a disk $\mathfrak{D}$, consider a site $u\in \mathfrak{D}_{\partial}$. One can decompose $K_{\mathfrak{D}}$ as
\begin{equation}
    K_{\mathfrak{D}} = K_{\mathfrak{D} \setminus u} + K_{(N(u)\cap \mathfrak{D}) \cup \{ u\} } - K_{(N(u)\cap \mathfrak{D})}.
\end{equation}
In particular, we can decompose both $(N(u)\cap \mathfrak{D}) \cup \{u\}$ and  $(N(u)\cap \mathfrak{D})$ into a linear combination of modular Hamiltonians over faces, edges, and vertices. One can always choose the removed site such that the remaining subsystem is still a disk. Thus, one can repeat this argument, obtaining:
\begin{equation}\label{eq:mod_decomp}
	K_{\mathfrak{D}}=\sum_{\substack{f\in F(\mathfrak{D})}
	}K_{f}-\sum_{e\in E(\mathfrak{D})/E(\mathfrak{D}_{\partial})}K_{e}+\sum_{v\in\mathfrak{D}_{\text{int}}}K_{v}.
\end{equation}
(See Fig.~\ref{fig:local_terms}(b) for examples of $K_f$, $K_e$, and $K_v$.)

\begin{figure}[h]
	\centering
\begin{tikzpicture}
	\begin{scope}[scale=0.38]			
		\begin{scope}
			\clip (-2-0.25,-0.433) rectangle (10-0.5*8+0.25, 9*0.866-0.433);				
			\diskshadded{-2}{0}{6}{0}{gray!60!white}{white,opacity=0.4};
			\diskshadded{-1}{1}{0}{1}{gray!60!white}{white,opacity=0.4};
			\diskshadded{1}{1}{4}{1}{blue!50!cyan!30!white}{blue!50!cyan!10!white,opacity=0.4};
			\diskshadded{5}{1}{6}{1}{gray!60!white}{white,opacity=0.4};
			\diskshadded{-1}{2}{0}{2}{gray!60!white}{white,opacity=0.4};
			\diskshadded{1}{2}{5}{2}{blue!50!cyan!30!white}{blue!50!cyan!10!white,opacity=0.4};
			\diskshadded{6}{2}{7}{2}{gray!60!white}{white,opacity=0.4};
			\diskshadded{0}{3}{0}{3}{gray!60!white}{white,opacity=0.4};
			\diskshadded{1}{3}{6}{3}{blue!50!cyan!30!white}{blue!50!cyan!10!white,opacity=0.4};
			\diskshadded{7}{3}{7}{3}{gray!60!white}{white,opacity=0.4};
			\diskshadded{0}{4}{0}{4}{gray!60!white}{white,opacity=0.4};
			\diskshadded{1}{4}{7}{4}{blue!50!cyan!30!white}{blue!50!cyan!10!white,opacity=0.4};
			\diskshadded{8}{4}{8}{4}{gray!60!white}{white,opacity=0.4};
			\diskshadded{1}{5}{1}{5}{gray!60!white}{white,opacity=0.4};
			\diskshadded{2}{5}{7}{5}{blue!50!cyan!30!white}{blue!50!cyan!10!white,opacity=0.4};
			\diskshadded{8}{5}{8}{5}{gray!60!white}{white,opacity=0.4};
			\diskshadded{1}{6}{2}{6}{gray!60!white}{white,opacity=0.4};
			\diskshadded{3}{6}{7}{6}{blue!50!cyan!30!white}{blue!50!cyan!10!white,opacity=0.4};
			\diskshadded{8}{6}{9}{6}{gray!60!white}{white,opacity=0.4};
			\diskshadded{2}{7}{3}{7}{gray!60!white}{white,opacity=0.4};
			\diskshadded{4}{7}{7}{7}{blue!50!cyan!30!white}{blue!50!cyan!10!white,opacity=0.4};
			\diskshadded{8}{7}{9}{7}{gray!60!white}{white,opacity=0.4};
			\diskshadded{2}{8}{10}{8}{gray!60!white}{white,opacity=0.4};
			
			\diskshadded{1}{1}{4}{1}{blue!50!cyan!30!white}{blue!50!cyan!20!white,opacity=0.4};
			\diskshadded{1}{2}{1}{4}{blue!50!cyan!30!white}{blue!50!cyan!20!white,opacity=0.4};
			\diskshadded{7}{4}{7}{7}{blue!50!cyan!30!white}{blue!50!cyan!20!white,opacity=0.4};
			\diskshadded{4}{7}{6}{7}{blue!50!cyan!30!white}{blue!50!cyan!20!white,opacity=0.4};
			\diskshadded{3}{6}{3}{6}{blue!50!cyan!30!white}{blue!50!cyan!20!white,opacity=0.4};
			\diskshadded{2}{5}{2}{5}{blue!50!cyan!30!white}{blue!50!cyan!20!white,opacity=0.4};
			\diskshadded{5}{2}{5}{2}{blue!50!cyan!30!white}{blue!50!cyan!20!white,opacity=0.4};
			\diskshadded{6}{3}{6}{3}{blue!50!cyan!30!white}{blue!50!cyan!20!white,opacity=0.4};
			
			\draw[line width=0.8 pt] (30:0.578) -- (90:0.578) 
			-- ++(90:0.578)
			-- ++(150:0.578)-- ++(90:0.578) -- ++(150:0.578)-- ++(90:0.578) -- ++(150:0.578)-- ++(90:0.578) 
			-- ++(30:0.578)--++(90:0.578) -- ++(30:0.578)--++(90:0.578) -- ++(30:0.578)--++(90:0.578) -- ++(30:0.578)
			--++(-30:0.578) -- ++(30:0.578)--++(-30:0.578) -- ++(30:0.578)--++(-30:0.578) -- ++(30:0.578)--++(-30:0.578)
			-- ++(-90:0.578)--++(-30:0.578) -- ++(-90:0.578)--++(-30:0.578) -- ++(-90:0.578)--++(-30:0.578) -- ++(-90:0.578)
			--++(-150:0.578)--++(-90:0.578)--++(-150:0.578)--++(-90:0.578)--++(-150:0.578)--++(-90:0.578)--++(-150:0.578)
			--++(150:0.578)--++(210:0.578)--++(150:0.578)--++(210:0.578)--++(150:0.578)--++(210:0.578)--cycle;
			
			\draw[line width=0.8 pt, color=\ycolor!87!black] 
			++(90:0.578) ++(30:0.578)++(90:0.578)
			-- ++(90:0.578)
			-- ++(150:0.578)-- ++(90:0.578) -- ++(150:0.578)-- ++(90:0.578) 
			-- ++(30:0.578)--++(90:0.578) -- ++(30:0.578)--++(90:0.578) -- ++(30:0.578)
			--++(-30:0.578) -- ++(30:0.578)--++(-30:0.578) -- ++(30:0.578)--++(-30:0.578)
			-- ++(-90:0.578)--++(-30:0.578) -- ++(-90:0.578)--++(-30:0.578) -- ++(-90:0.578)
			--++(-150:0.578)--++(-90:0.578)--++(-150:0.578)--++(-90:0.578)--++(-150:0.578)
			--++(150:0.578)--++(210:0.578)--++(150:0.578)--++(210:0.578)--cycle;
			
			\node[]  at (2.2, 3.4){\large{$\mathfrak{D}_{\text{int}}$}};
			\node[]  at (5.45, 0.38){\large{$\mathfrak{D}_{\partial}$}};
			\draw[->, line width=0.7 pt, color=black] 
			(4.6, 0.7)
			.. controls +(180:0.3) and +(-80:0.4) ..
			(3.9, 1.9);
			\node[]  at (2, -1.3){{(a)}};
		\end{scope}
		\draw[] (-2-0.25,-0.433) rectangle (10-0.5*8+0.25, 9*0.866-0.433);
	\end{scope}
	
	\begin{scope}[xshift= 4 cm, scale=0.38]
		\begin{scope}
			\clip (-2-0.25,-0.433) rectangle (10-0.5*8+0.25, 9*0.866-0.433);
			
			\diskshadded{-2}{0}{6}{0}{gray!60!white}{white,opacity=0.4};
			\diskshadded{-1}{1}{0}{1}{gray!60!white}{white,opacity=0.4};
			\diskshadded{1}{1}{4}{1}{blue!50!cyan!30!white}{blue!50!cyan!10!white,opacity=0.4};
			\diskshadded{5}{1}{6}{1}{gray!60!white}{white,opacity=0.4};
			\diskshadded{-1}{2}{0}{2}{gray!60!white}{white,opacity=0.4};
			\diskshadded{1}{2}{5}{2}{blue!50!cyan!30!white}{blue!50!cyan!10!white,opacity=0.4};
			\diskshadded{6}{2}{7}{2}{gray!60!white}{white,opacity=0.4};
			\diskshadded{0}{3}{0}{3}{gray!60!white}{white,opacity=0.4};
			\diskshadded{1}{3}{6}{3}{blue!50!cyan!30!white}{blue!50!cyan!10!white,opacity=0.4};
			\diskshadded{7}{3}{7}{3}{gray!60!white}{white,opacity=0.4};
			\diskshadded{0}{4}{0}{4}{gray!60!white}{white,opacity=0.4};
			\diskshadded{1}{4}{7}{4}{blue!50!cyan!30!white}{blue!50!cyan!10!white,opacity=0.4};
			\diskshadded{8}{4}{8}{4}{gray!60!white}{white,opacity=0.4};
			\diskshadded{1}{5}{1}{5}{gray!60!white}{white,opacity=0.4};
			\diskshadded{2}{5}{7}{5}{blue!50!cyan!30!white}{blue!50!cyan!10!white,opacity=0.4};
			\diskshadded{8}{5}{8}{5}{gray!60!white}{white,opacity=0.4};
			\diskshadded{1}{6}{2}{6}{gray!60!white}{white,opacity=0.4};
			\diskshadded{3}{6}{7}{6}{blue!50!cyan!30!white}{blue!50!cyan!10!white,opacity=0.4};
			\diskshadded{8}{6}{9}{6}{gray!60!white}{white,opacity=0.4};
			\diskshadded{2}{7}{3}{7}{gray!60!white}{white,opacity=0.4};
			\diskshadded{4}{7}{7}{7}{blue!50!cyan!30!white}{blue!50!cyan!10!white,opacity=0.4};
			\diskshadded{8}{7}{9}{7}{gray!60!white}{white,opacity=0.4};
			\diskshadded{2}{8}{10}{8}{gray!60!white}{white,opacity=0.4};
			
			\draw[line width=0.8 pt] (30:0.578) -- (90:0.578) 
			-- ++(90:0.578)
			-- ++(150:0.578)-- ++(90:0.578) -- ++(150:0.578)-- ++(90:0.578) -- ++(150:0.578)-- ++(90:0.578) 
			-- ++(30:0.578)--++(90:0.578) -- ++(30:0.578)--++(90:0.578) -- ++(30:0.578)--++(90:0.578) -- ++(30:0.578)
			--++(-30:0.578) -- ++(30:0.578)--++(-30:0.578) -- ++(30:0.578)--++(-30:0.578) -- ++(30:0.578)--++(-30:0.578)
			-- ++(-90:0.578)--++(-30:0.578) -- ++(-90:0.578)--++(-30:0.578) -- ++(-90:0.578)--++(-30:0.578) -- ++(-90:0.578)
			--++(-150:0.578)--++(-90:0.578)--++(-150:0.578)--++(-90:0.578)--++(-150:0.578)--++(-90:0.578)--++(-150:0.578)
			--++(150:0.578)--++(210:0.578)--++(150:0.578)--++(210:0.578)--++(150:0.578)--++(210:0.578)--cycle;
			
			\coloredkv{yellow!60!white,opacity=0.5}{0}{6}{4};
			\coloredkf{green!60!yellow!50!white,opacity=0.5}{0}{1}{1};
			\coloredkf{green!60!yellow!50!white,opacity=0.5}{3}{2}{4};
			\coloredke{yellow!40!red!50!white,opacity=0.5}{1}{6}{6};
			\coloredke{yellow!40!red!50!white,opacity=0.5}{3}{4}{6};
			\coloredke{yellow!40!red!50!white,opacity=0.5}{2}{4}{3};
			
			\node[]  at (2, -1.3){{(b)}};
		\end{scope}
		\draw[] (-2-0.25,-0.433) rectangle (10-0.5*8+0.25, 9*0.866-0.433);
	\end{scope}
\end{tikzpicture}
	\caption{(a) A disk $\mathfrak{D}$ and its partition into $\mathfrak{D}_{\text{int}}$ and $\mathfrak{D}_{\partial}$. (b) A disk $\mathfrak{D}$ (blue) and the 1-site terms $K_v$ (yellow), 2-site terms $K_e$ (orange), 3 site terms $K_f$ (green) in the local decomposition (\ref{eq:mod_decomp}) of modular Hamiltonian $K_{\mathfrak{D}}$. 
	}
	\label{fig:local_terms}
\end{figure}
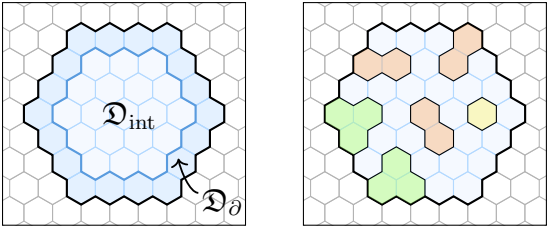

Alternatively, we can also derive a different decomposition which will be useful in Section~\ref{sec:modular_current}, \emph{i.e.,} $K_{\mathfrak{D}}= \sum_{v\in \mathfrak{D}}\widetilde{K}_{v}^{\mathfrak{D}}$, where
\begin{equation}\label{eq:mod_decomp2_addendum}
	\widetilde{K}_{v}^{\mathfrak{D}}=
	\begin{cases}
		{\displaystyle {\textstyle\frac{1}{3}}\sum_{f:v\in f}K_{f}-{\textstyle\frac{1}{2}}\sum_{e:v\in e}K_{e}+K_{v}}, & v\in\mathfrak{D}_{\text{int}}\\
		{\displaystyle {\textstyle\frac{1}{3}}\sum_{\substack{f:v\in f,\\
					f\in F(\mathfrak{D})
				}
			}K_{f}-{\textstyle\frac{1}{2}}\sum_{\substack{e:v\in e,\\
					e\in E(\mathfrak{D})\setminus E(\mathfrak{D}_{\partial})
				}
			}K_{e}}, & v\in\mathfrak{D}_{\partial}
	\end{cases}
	~.
\end{equation}
Note that $\mathfrak{D}$ in the superscript is not without meanings; each term, \emph{e.g.,} $\widetilde{K}_{v}^{\mathfrak{D}}$, depends on the choice of the disk $\mathfrak{D}$.

\subsection{Disk with an anyon}\label{ss:anyon_disk}

The decomposition in Eq.~\eqref{eq:mod_decomp} applies to the modular Hamiltonian of the \emph{ground state}. Here we show that a similar decomposition applies to excited states which contains a single anyon. For concreteness, consider a low-energy state with an anyon $a$ located in the interior of the disk $\mathfrak{D}$; see Fig.~\ref{fig:anyon_1}(a). We denote this state by $\sigma_{\mathfrak{D}}^{[a]}$. (Let us remark that we do not actually assume that $\sigma_{\mathfrak{D}}^{[a]}$ is the reduced density matrix of an eigenstate of some local Hamiltonian. The only fact we use is that the reduced density matrices of $\sigma_{\mathfrak{D}}^{[a]}$ and $\sigma_{\mathfrak{D}}$ are equal on any disk-shaped region $D\subset \mathfrak{D}$ that does not contain the anyon.)

The main complication in this context is that the axiom in Fig.~\ref{fig:axioms} does not always hold. If a disk contains a single anyon $a$, it has an extra contribution to the von Neumann entropy~\cite{Kitaev2006,SKK2019}:
\begin{equation}
 S(\sigma^{[a]}_{\mathfrak{D}})-S(\sigma_{\mathfrak{D}})= \ln d_a. \label{eq:entropy_quantum_dimension_correction}
\end{equation}
Therefore, if the anyon is located in region $C$ in Fig.~\ref{fig:axioms}, then $S_{BC}+S_{CD}-S_B-S_D= 2 \ln d_a$, which is generally nonzero. 
\begin{figure}[h]
	\centering
	\begin{tikzpicture}
		\begin{scope}[scale=1.2,xshift=-3 cm, yshift=1 cm]
			\draw[fill=blue!50!cyan!20!white, line width=0.8 pt] plot [smooth cycle] coordinates {(0:1.2) (30:1.2-0.1) (60:1.2-0.3) (90:1.2-0.15) (120:1.2)(150:1.2)(180:1.2-0.4)(210:1.2-0.8)(240:1.2-0.85)(270:1.2-0.6)(300:1.2-0.2)(330:1.2-0.1)};
			\draw[fill=red!50] (15:0.3) circle (0.07);
			\draw[]  (40:2) rectangle (220:1.7);
			\node[] (A) at (0.15,-1.35) {\footnotesize{(a)}};
			\node[]  at (0, 0.7){\large{$\mathfrak{D}$}};	
			\node[] at (0.5, 0) {$a$};	
		\end{scope}	
		\begin{scope}[scale=0.38]
			\begin{scope}
				\clip (-2-0.25,-0.433) rectangle (10-0.5*8+0.25, 9*0.866-0.433);
				\diskshadded{-2}{0}{6}{0}{gray!60!white}{white,opacity=0.4};
				\diskshadded{-1}{1}{0}{1}{gray!60!white}{white,opacity=0.4};
				\diskshadded{1}{1}{4}{1}{blue!50!cyan!30!white}{blue!50!cyan!10!white,opacity=0.4};
				\diskshadded{5}{1}{6}{1}{gray!60!white}{white,opacity=0.4};
				\diskshadded{-1}{2}{0}{2}{gray!60!white}{white,opacity=0.4};
				\diskshadded{1}{2}{5}{2}{blue!50!cyan!30!white}{blue!50!cyan!10!white,opacity=0.4};
				\diskshadded{6}{2}{7}{2}{gray!60!white}{white,opacity=0.4};
				\diskshadded{0}{3}{0}{3}{gray!60!white}{white,opacity=0.4};
				\diskshadded{1}{3}{6}{3}{blue!50!cyan!30!white}{blue!50!cyan!10!white,opacity=0.4};
				\diskshadded{7}{3}{7}{3}{gray!60!white}{white,opacity=0.4};
				\diskshadded{0}{4}{0}{4}{gray!60!white}{white,opacity=0.4};
				\diskshadded{1}{4}{7}{4}{blue!50!cyan!30!white}{blue!50!cyan!10!white,opacity=0.4};
				\diskshadded{8}{4}{8}{4}{gray!60!white}{white,opacity=0.4};
				\diskshadded{1}{5}{1}{5}{gray!60!white}{white,opacity=0.4};
				\diskshadded{2}{5}{7}{5}{blue!50!cyan!30!white}{blue!50!cyan!10!white,opacity=0.4};
				\diskshadded{8}{5}{8}{5}{gray!60!white}{white,opacity=0.4};
				\diskshadded{1}{6}{2}{6}{gray!60!white}{white,opacity=0.4};
				\diskshadded{3}{6}{7}{6}{blue!50!cyan!30!white}{blue!50!cyan!10!white,opacity=0.4};
				\diskshadded{8}{6}{9}{6}{gray!60!white}{white,opacity=0.4};
				\diskshadded{2}{7}{3}{7}{gray!60!white}{white,opacity=0.4};
				\diskshadded{4}{7}{7}{7}{blue!50!cyan!30!white}{blue!50!cyan!10!white,opacity=0.4};
				\diskshadded{8}{7}{9}{7}{gray!60!white}{white,opacity=0.4};
				\diskshadded{2}{8}{10}{8}{gray!60!white}{white,opacity=0.4};
			\end{scope}
			\draw[] (-2-0.25,-0.433) rectangle (10-0.5*8+0.25, 9*0.866-0.433);
			\draw[line width=0.8 pt] (30:0.578) -- (90:0.578) 
			-- ++(90:0.578)
			-- ++(150:0.578)-- ++(90:0.578) -- ++(150:0.578)-- ++(90:0.578) -- ++(150:0.578)-- ++(90:0.578) 
			-- ++(30:0.578)--++(90:0.578) -- ++(30:0.578)--++(90:0.578) -- ++(30:0.578)--++(90:0.578) -- ++(30:0.578)
			--++(-30:0.578) -- ++(30:0.578)--++(-30:0.578) -- ++(30:0.578)--++(-30:0.578) -- ++(30:0.578)--++(-30:0.578)
			-- ++(-90:0.578)--++(-30:0.578) -- ++(-90:0.578)--++(-30:0.578) -- ++(-90:0.578)--++(-30:0.578) -- ++(-90:0.578)
			--++(-150:0.578)--++(-90:0.578)--++(-150:0.578)--++(-90:0.578)--++(-150:0.578)--++(-90:0.578)--++(-150:0.578)
			--++(150:0.578)--++(210:0.578)--++(150:0.578)--++(210:0.578)--++(150:0.578)--++(210:0.578)--cycle;
			\draw[fill=red!50] (30:0.578*4)++(90:0.578*4) circle (0.21);
			\node[] at (2.75,0.578*6) {$a$};
			\node[]  at (2.0, 5.4){\large{$\mathfrak{D}$}};
			\node[]  at (2.0, -1.15){\footnotesize{(b)}};
		\end{scope}	
	\end{tikzpicture}
	\caption{(a) A disk $\mathfrak{D}$ containing an anyon. (b) Without loss of generality, we can coarse-grain the system in such a way that the anyon is located on a single cell. (In the dual lattice, this anyon lives on the vertex.)}\label{fig:anyon_1}
\end{figure}
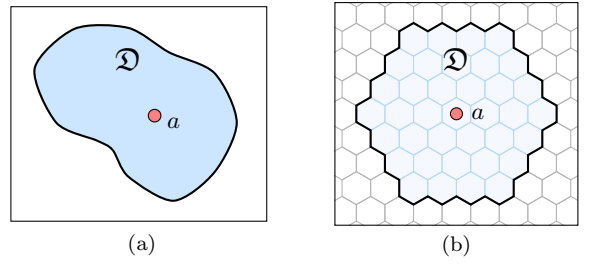

However, we can circumvent this issue by computing the conditional mutual information directly using Eq.~\eqref{eq:entropy_quantum_dimension_correction}. For concreteness, without loss of generality, assume that the anyon is located on a vertex of the coarse-grained lattice; see Fig.~\ref{fig:anyon_1}(b) for an illustration.   
Suppose $XY, YZ, Y, XYZ$ all have disk topology, and an anyon $a$ is located in $Y$. Then the net contribution of entropy in $S_{XY}+S_{YZ}-S_Y-S_{XYZ}$, due to the  anyon $a$, is zero, ($\ln d_a+ \ln d_a -\ln d_a - \ln d_a=0$). For example, the choices of $X,Y,Z$ below all have $I(X:Z \vert Y)_{\sigma^{[a]}}=0$, where the red dot is anyon $a$:
  \begin{equation}
  \begin{tikzpicture}
  	\begin{scope}[scale=0.45]
  		\begin{scope}
  			\clip (3-0.5*2-0.25,2*0.866-0.433) rectangle (16-0.5*6+0.25, 7*0.866-0.433);
  			\diskshadded{3}{2}{14}{2}{gray!60!white}{white,opacity=0.4};
  			\diskshadded{4}{3}{14}{3}{gray!60!white}{white,opacity=0.4};
  			\diskshadded{4}{4}{15}{4}{gray!60!white}{white,opacity=0.4};
  			\diskshadded{5}{5}{15}{5}{gray!60!white}{white,opacity=0.4};
  			\diskshadded{5}{6}{16}{6}{gray!60!white}{white,opacity=0.4};
  			\begin{scope}
  				\coloredboxhx{\zcolor,opacity=0.4}{5}{3}{5}{3};
  				\coloredboxhx{\ycolor,opacity=0.4}{6}{4}{6}{4};
  				\coloredboxhx{\xcolor,opacity=0.4}{7}{5}{7}{5};
  				
  				\draw[fill=red!50] (30:0.578*8)++(90:0.578*2) circle (0.21);
  				
  				\coloredboxhx{\zcolor,opacity=0.4}{9}{3}{9}{3};
  				\coloredboxhx{\zcolor,opacity=0.4}{8}{3}{8}{3};

  				\coloredboxhx{\ycolor,opacity=0.4}{9}{4}{9}{4};
  				\coloredboxhx{\ycolor,opacity=0.4}{10}{4}{10}{4};
  				
  				\coloredboxhx{\xcolor,opacity=0.4}{10}{5}{10}{5};
  				
  				\draw[fill=red!50] (30:0.578*14)++(90:-0.578) circle (0.21);
  				
  				\coloredboxhx{\zcolor,opacity=0.4}{14}{4}{14}{4};
  				\coloredboxhx{\zcolor,opacity=0.4}{13}{3}{13}{3};
  				
  				\coloredboxhx{\ycolor,opacity=0.4}{12}{3}{12}{3};
  				\coloredboxhx{\ycolor,opacity=0.4}{13}{4}{13}{4};
  				\coloredboxhx{\ycolor,opacity=0.4}{14}{5}{14}{5};
  				
  				\coloredboxhx{\xcolor,opacity=0.4}{13}{5}{13}{5};
  				\coloredboxhx{\xcolor,opacity=0.4}{12}{4}{12}{4};
  				
  				\draw[fill=red!50] (30:0.578*22)++(90:-0.578*5) circle (0.21);
  			\end{scope}
  			\draw[] (3-0.5*2-0.25,2*0.866-0.433) rectangle (16-0.5*6+0.25, 7*0.866-0.433);
  		\end{scope}
  		
  	\end{scope}	
  	\begin{scope}[xshift=6.5 cm, yshift=0.5 cm]
  		\draw[fill=\xcolor, fill opacity=0.4] (3.5-3,1.3) rectangle (0.8, 1.6);
  		\draw[fill=\ycolor, fill opacity=0.4] (3.5-3,0.9) rectangle (0.8, 1.2);
  		\draw[fill=\zcolor, fill opacity=0.4] (3.5-3,0.5) rectangle (0.8, 0.8);
  		\node [] (A) at (4.2-3,1.45) {$:X$};
  		\node [] (B) at (4.2-3,1.05) {$:Y$};
  		\node [] (C) at (4.2-3,0.65) {$:Z$};
  	\end{scope}
  \end{tikzpicture}
  \end{equation}
 Similarly, the net contribution of the anyon to the conditional mutual information is zero if the anyon is located in $X$ or $Z$, leading to the same quantum Markov chain condition $I(X:Z \vert Y)_{\sigma^{[a]}}=0$. 
 Thus, the same local decomposition (\ref{eq:mod_decomp}) holds for the modular Hamiltonian of $\sigma^{[a]}_{\mathfrak{D}}$.

\subsection{Annulus}

Unlike the modular Hamiltonian of a disk, the modular Hamiltonian of an annulus may not admit a local decomposition, even if the area law (Eq.~\eqref{eq:area_law}) holds. However, there is another state on the annulus which is locally indistinguishable from the ground state such that its logarithm admits a local decomposition. (In the special case in which the topological entanglement entropy $\gamma$ is $0$, this state \emph{is} the reduced density matrix of the ground state on the annulus.)

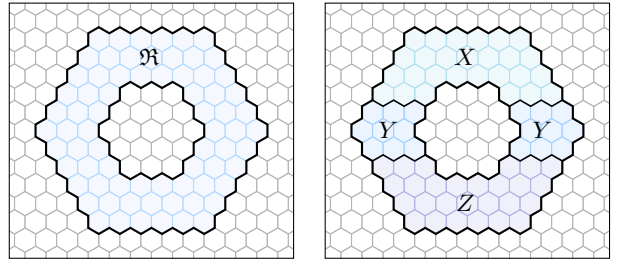
\begin{figure}[h]
	\centering
	\begin{tikzpicture}
	\begin{scope}[scale=0.28]
		\begin{scope}
			\clip(-3-0.5-0.25,-0.866) rectangle (13-0.5*6-0.25, 13*0.866);
			\diskshadded{-4}{-1}{10}{-1}{gray!60!white}{white,opacity=0.4};
			\diskshadded{-4}{0}{10}{0}{gray!60!white}{white,opacity=0.4};
			\diskshadded{-4}{1}{0}{1}{gray!60!white}{white,opacity=0.4};
			\diskshadded{7}{1}{11}{1}{gray!60!white}{white,opacity=0.4};
			\diskshadded{1}{1}{6}{1}{blue!50!cyan!30!white}{blue!50!cyan!10!white,opacity=0.4};
			\diskshadded{-3}{2}{0}{2}{gray!60!white}{white,opacity=0.4};
			\diskshadded{8}{2}{11}{2}{gray!60!white}{white,opacity=0.4};
			\diskshadded{1}{2}{7}{2}{blue!50!cyan!30!white}{blue!50!cyan!10!white,opacity=0.4};
			\diskshadded{-3}{3}{0}{3}{gray!60!white}{white,opacity=0.4};
			\diskshadded{9}{3}{12}{3}{gray!60!white}{white,opacity=0.4};
			\diskshadded{1}{3}{8}{3}{blue!50!cyan!30!white}{blue!50!cyan!10!white,opacity=0.4};
			\diskshadded{-2}{4}{0}{4}{gray!60!white}{white,opacity=0.4};
			\diskshadded{4}{4}{6}{4}{gray!60!white}{white,opacity=0.4};
			\diskshadded{10}{4}{12}{4}{gray!60!white}{white,opacity=0.4};
			\diskshadded{1}{4}{3}{4}{blue!50!cyan!30!white}{blue!50!cyan!10!white,opacity=0.4};
			\diskshadded{7}{4}{9}{4}{blue!50!cyan!30!white}{blue!50!cyan!10!white,opacity=0.4};
			\diskshadded{-2}{5}{0}{5}{gray!60!white}{white,opacity=0.4};
			\diskshadded{4}{5}{7}{5}{gray!60!white}{white,opacity=0.4};
			\diskshadded{11}{5}{13}{5}{gray!60!white}{white,opacity=0.4};
			\diskshadded{1}{5}{3}{5}{blue!50!cyan!30!white}{blue!50!cyan!10!white,opacity=0.4};
			\diskshadded{8}{5}{10}{5}{blue!50!cyan!30!white}{blue!50!cyan!10!white,opacity=0.4};
			\diskshadded{-1}{6}{0}{6}{gray!60!white}{white,opacity=0.4};
			\diskshadded{4}{6}{8}{6}{gray!60!white}{white,opacity=0.4};
			\diskshadded{12}{6}{13}{6}{gray!60!white}{white,opacity=0.4};
			\diskshadded{1}{6}{3}{6}{blue!50!cyan!30!white}{blue!50!cyan!10!white,opacity=0.4};
			\diskshadded{9}{6}{11}{6}{blue!50!cyan!30!white}{blue!50!cyan!10!white,opacity=0.4};
			\diskshadded{-1}{7}{1}{7}{gray!60!white}{white,opacity=0.4};
			\diskshadded{5}{7}{8}{7}{gray!60!white}{white,opacity=0.4};
			\diskshadded{12}{7}{14}{7}{gray!60!white}{white,opacity=0.4};
			\diskshadded{2}{7}{4}{7}{blue!50!cyan!30!white}{blue!50!cyan!10!white,opacity=0.4};
			\diskshadded{9}{7}{11}{7}{blue!50!cyan!30!white}{blue!50!cyan!10!white,opacity=0.4};
			\diskshadded{0}{8}{2}{8}{gray!60!white}{white,opacity=0.4};
			\diskshadded{5}{8}{8}{8}{gray!60!white}{white,opacity=0.4};
			\diskshadded{12}{8}{14}{8}{gray!60!white}{white,opacity=0.4};
			\diskshadded{3}{8}{5}{8}{blue!50!cyan!30!white}{blue!50!cyan!10!white,opacity=0.4};
			\diskshadded{9}{8}{11}{8}{blue!50!cyan!30!white}{blue!50!cyan!10!white,opacity=0.4};
			
			\diskshadded{0}{9}{3}{9}{gray!60!white}{white,opacity=0.4};
			\diskshadded{12}{9}{15}{9}{gray!60!white}{white,opacity=0.4};
			\diskshadded{4}{9}{11}{9}{blue!50!cyan!30!white}{blue!50!cyan!10!white,opacity=0.4};
			\diskshadded{1}{10}{4}{10}{gray!60!white}{white,opacity=0.4};
			\diskshadded{12}{10}{15}{10}{gray!60!white}{white,opacity=0.4};
			\diskshadded{5}{10}{11}{10}{blue!50!cyan!30!white}{blue!50!cyan!10!white,opacity=0.4};
			\diskshadded{1}{11}{5}{11}{gray!60!white}{white,opacity=0.4};
			\diskshadded{12}{11}{16}{11}{gray!60!white}{white,opacity=0.4};
			\diskshadded{6}{11}{11}{11}{blue!50!cyan!30!white}{blue!50!cyan!10!white,opacity=0.4};
			\diskshadded{2}{12}{16}{13}{gray!60!white}{white,opacity=0.4};
		\end{scope}
		\draw (-3-0.5-0.25,-0.866) rectangle (13-0.5*6-0.25, 13*0.866);

		\draw[line width=0.8 pt] (30:0.578) -- (90:0.578) 
		-- ++(90:0.578) -- ++(150:0.578)-- ++(90:0.578) -- ++(150:0.578)-- ++(90:0.578) -- ++(150:0.578)-- ++(90:0.578) -- ++(150:0.578)-- ++(90:0.578) -- ++(150:0.578)
		--++(90:0.578) -- ++(30:0.578)--++(90:0.578) -- ++(30:0.578)--++(90:0.578) -- ++(30:0.578)--++(90:0.578) -- ++(30:0.578)--++(90:0.578) -- ++(30:0.578)--++(90:0.578) -- ++(30:0.578)
		--++(-30:0.578) -- ++(30:0.578)--++(-30:0.578) -- ++(30:0.578)--++(-30:0.578) -- ++(30:0.578)--++(-30:0.578) -- ++(30:0.578)--++(-30:0.578) -- ++(30:0.578)--++(-30:0.578) 
		-- ++(-90:0.578)--++(-30:0.578) -- ++(-90:0.578)--++(-30:0.578) -- ++(-90:0.578)--++(-30:0.578) -- ++(-90:0.578)--++(-30:0.578) -- ++(-90:0.578)--++(-30:0.578) 
		--++(-90:0.578)--++(-150:0.578)--++(-90:0.578)--++(-150:0.578)--++(-90:0.578)--++(-150:0.578)--++(-90:0.578)--++(-150:0.578)--++(-90:0.578)--++(-150:0.578)--++(-90:0.578)--++(-150:0.578)
		--++(150:0.578)--++(210:0.578)--++(150:0.578)--++(210:0.578)--++(150:0.578)--++(210:0.578)--++(150:0.578)--++(210:0.578)--++(150:0.578)--++(210:0.578)--cycle;	
		\draw[line width=0.8 pt] 
		++(90:0.578*3) ++(30:0.578*3)++(90:0.578)
		-- ++(90:0.578)
		-- ++(150:0.578)-- ++(90:0.578) -- ++(150:0.578)-- ++(90:0.578) 
		-- ++(30:0.578)--++(90:0.578) -- ++(30:0.578)--++(90:0.578) -- ++(30:0.578)
		--++(-30:0.578) -- ++(30:0.578)--++(-30:0.578) -- ++(30:0.578)--++(-30:0.578)
		-- ++(-90:0.578)--++(-30:0.578) -- ++(-90:0.578)--++(-30:0.578) -- ++(-90:0.578)
		--++(-150:0.578)--++(-90:0.578)--++(-150:0.578)--++(-90:0.578)--++(-150:0.578)
		--++(150:0.578)--++(210:0.578)--++(150:0.578)--++(210:0.578)--cycle;

		\node[]  at (2.9, 8.7){{{$\mathfrak{R}$}}};		
	\end{scope}	
	
	\begin{scope}[xshift=4.2 cm, scale=0.28]
		\begin{scope}
			\clip(-3-0.5-0.25,-0.866) rectangle (13-0.5*6-0.25, 13*0.866);
			\diskshadded{-4}{-1}{10}{-1}{gray!60!white}{white,opacity=0.4};
			\diskshadded{-4}{0}{10}{0}{gray!60!white}{white,opacity=0.4};
			\diskshadded{-4}{1}{0}{1}{gray!60!white}{white,opacity=0.4};
			\diskshadded{7}{1}{11}{1}{gray!60!white}{white,opacity=0.4};
			\diskshadded{1}{1}{6}{1}{\zcolor!60!white}{\zcolor!30!white,opacity=0.4};
			\diskshadded{-3}{2}{0}{2}{gray!60!white}{white,opacity=0.4};
			\diskshadded{8}{2}{11}{2}{gray!60!white}{white,opacity=0.4};
			\diskshadded{1}{2}{7}{2}{\zcolor!60!white}{\zcolor!30!white,opacity=0.4};
			\diskshadded{-3}{3}{0}{3}{gray!60!white}{white,opacity=0.4};
			\diskshadded{9}{3}{12}{3}{gray!60!white}{white,opacity=0.4};
			\diskshadded{1}{3}{8}{3}{\zcolor!60!white}{\zcolor!30!white,opacity=0.4};
			\diskshadded{-2}{4}{0}{4}{gray!60!white}{white,opacity=0.4};
			\diskshadded{4}{4}{6}{4}{gray!60!white}{white,opacity=0.4};
			\diskshadded{10}{4}{12}{4}{gray!60!white}{white,opacity=0.4};
			\diskshadded{1}{4}{3}{4}{\zcolor!60!white}{\zcolor!30!white,opacity=0.4};
			\diskshadded{7}{4}{9}{4}{\zcolor!60!white}{\zcolor!30!white,opacity=0.4};
			\diskshadded{-2}{5}{0}{5}{gray!60!white}{white,opacity=0.4};
			\diskshadded{4}{5}{7}{5}{gray!60!white}{white,opacity=0.4};
			\diskshadded{11}{5}{13}{5}{gray!60!white}{white,opacity=0.4};
			\diskshadded{1}{5}{3}{5}{\ycolor!60!white}{\ycolor!30!white,opacity=0.4};
			\diskshadded{8}{5}{10}{5}{\ycolor!60!white}{\ycolor!30!white,opacity=0.4};
			\diskshadded{-1}{6}{0}{6}{gray!60!white}{white,opacity=0.4};
			\diskshadded{4}{6}{8}{6}{gray!60!white}{white,opacity=0.4};
			\diskshadded{12}{6}{13}{6}{gray!60!white}{white,opacity=0.4};
			\diskshadded{1}{6}{3}{6}{\ycolor!60!white}{\ycolor!30!white,opacity=0.4};
			\diskshadded{9}{6}{11}{6}{\ycolor!60!white}{\ycolor!30!white,opacity=0.4};
			\diskshadded{-1}{7}{1}{7}{gray!60!white}{white,opacity=0.4};
			\diskshadded{5}{7}{8}{7}{gray!60!white}{white,opacity=0.4};
			\diskshadded{12}{7}{14}{7}{gray!60!white}{white,opacity=0.4};
			\diskshadded{2}{7}{4}{7}{\ycolor!60!white}{\ycolor!30!white,opacity=0.4};
			\diskshadded{9}{7}{11}{7}{\ycolor!60!white}{\ycolor!30!white,opacity=0.4};
			\diskshadded{0}{8}{2}{8}{gray!60!white}{white,opacity=0.4};
			\diskshadded{5}{8}{8}{8}{gray!60!white}{white,opacity=0.4};
			\diskshadded{12}{8}{14}{8}{gray!60!white}{white,opacity=0.4};
			\diskshadded{3}{8}{5}{8}{\xcolor!60!white}{\xcolor!30!white,opacity=0.4};
			\diskshadded{9}{8}{11}{8}{\xcolor!60!white}{\xcolor!30!white,opacity=0.4};
			
			\diskshadded{0}{9}{3}{9}{gray!60!white}{white,opacity=0.4};
			\diskshadded{12}{9}{15}{9}{gray!60!white}{white,opacity=0.4};
			\diskshadded{4}{9}{11}{9}{\xcolor!60!white}{\xcolor!30!white,opacity=0.4};
			\diskshadded{1}{10}{4}{10}{gray!60!white}{white,opacity=0.4};
			\diskshadded{12}{10}{15}{10}{gray!60!white}{white,opacity=0.4};
			\diskshadded{5}{10}{11}{10}{\xcolor!60!white}{\xcolor!30!white,opacity=0.4};
			\diskshadded{1}{11}{5}{11}{gray!60!white}{white,opacity=0.4};
			\diskshadded{12}{11}{16}{11}{gray!60!white}{white,opacity=0.4};
			\diskshadded{6}{11}{11}{11}{\xcolor!60!white}{\xcolor!30!white,opacity=0.4};
			\diskshadded{2}{12}{16}{13}{gray!60!white}{white,opacity=0.4};
		\end{scope}
		\draw (-3-0.5-0.25,-0.866) rectangle (13-0.5*6-0.25, 13*0.866);
		
		\draw[line width=0.6 pt] 
		++(90:0.578*4) ++(150:0.578*3)++(90:0.578)
		--++(30:0.578)
		-- ++(-30:0.578)-- ++(30:0.578) -- ++(-30:0.578)-- ++(30:0.578); 
		\draw[line width=0.6 pt] 
		++(90:0.578*10) ++(150:0.578*3)
		-- ++(-30:0.578)
		-- ++(30:0.578)-- ++(-30:0.578) -- ++(30:0.578)-- ++(-30:0.578); 
		
		\draw[line width=0.6 pt] 
		++(90:0.578*4) ++(150:0.578*3)++(90:0.578)++(30:0.578*7)++(-30:0.578*6)
		--++(-30:0.578)
		-- ++(30:0.578)-- ++(-30:0.578) -- ++(30:0.578)-- ++(-30:0.578); 
		\draw[line width=0.6 pt] 
		++(90:0.578*10) ++(150:0.578*3)++(30:0.578*6)++(-30:0.578*7)
		--++(30:0.578)
		-- ++(-30:0.578)-- ++(30:0.578) -- ++(-30:0.578)-- ++(30:0.578); 
		\draw[line width=0.8 pt] (30:0.578) -- (90:0.578) 
		-- ++(90:0.578) -- ++(150:0.578)-- ++(90:0.578) -- ++(150:0.578)-- ++(90:0.578) -- ++(150:0.578)-- ++(90:0.578) -- ++(150:0.578)-- ++(90:0.578) -- ++(150:0.578)
		--++(90:0.578) -- ++(30:0.578)--++(90:0.578) -- ++(30:0.578)--++(90:0.578) -- ++(30:0.578)--++(90:0.578) -- ++(30:0.578)--++(90:0.578) -- ++(30:0.578)--++(90:0.578) -- ++(30:0.578)
		--++(-30:0.578) -- ++(30:0.578)--++(-30:0.578) -- ++(30:0.578)--++(-30:0.578) -- ++(30:0.578)--++(-30:0.578) -- ++(30:0.578)--++(-30:0.578) -- ++(30:0.578)--++(-30:0.578) 
		-- ++(-90:0.578)--++(-30:0.578) -- ++(-90:0.578)--++(-30:0.578) -- ++(-90:0.578)--++(-30:0.578) -- ++(-90:0.578)--++(-30:0.578) -- ++(-90:0.578)--++(-30:0.578) 
		--++(-90:0.578)--++(-150:0.578)--++(-90:0.578)--++(-150:0.578)--++(-90:0.578)--++(-150:0.578)--++(-90:0.578)--++(-150:0.578)--++(-90:0.578)--++(-150:0.578)--++(-90:0.578)--++(-150:0.578)
		--++(150:0.578)--++(210:0.578)--++(150:0.578)--++(210:0.578)--++(150:0.578)--++(210:0.578)--++(150:0.578)--++(210:0.578)--++(150:0.578)--++(210:0.578)--cycle;	
		\draw[line width=0.8 pt] 
		++(90:0.578*3) ++(30:0.578*3)++(90:0.578)
		-- ++(90:0.578)
		-- ++(150:0.578)-- ++(90:0.578) -- ++(150:0.578)-- ++(90:0.578) 
		-- ++(30:0.578)--++(90:0.578) -- ++(30:0.578)--++(90:0.578) -- ++(30:0.578)
		--++(-30:0.578) -- ++(30:0.578)--++(-30:0.578) -- ++(30:0.578)--++(-30:0.578)
		-- ++(-90:0.578)--++(-30:0.578) -- ++(-90:0.578)--++(-30:0.578) -- ++(-90:0.578)
		--++(-150:0.578)--++(-90:0.578)--++(-150:0.578)--++(-90:0.578)--++(-150:0.578)
		--++(150:0.578)--++(210:0.578)--++(150:0.578)--++(210:0.578)--cycle;

		\node[]  at (2.9, 8.7){{$X$}};	
		\node[]  at (2.9, 8.7-12*0.578){{$Z$}};	
		\node[]  at (2.9-3.66, 8.7-6*0.578){{$Y$}};	
		\node[]  at (2.9+3.66, 8.7-6*0.578){{$Y$}};		
	\end{scope}	
	\end{tikzpicture}
\caption{An annulus and its Levin-Wen partition $\mathfrak{R}=XYZ$.}\label{fig:annulus}
\end{figure}

Specifically, for an annulus $\mathfrak{R}$ (see Fig.~\ref{fig:annulus}), consider a Hamiltonian in the spirit of Eq.~\eqref{eq:mod_decomp}:  
\begin{equation}\label{eq:mod_decomp_annuli_1}
	K_{\mathfrak{R}}':=\sum_{\substack{f\in F(\mathfrak{R})}
	}K_{f}-\sum_{e\in E(\mathfrak{R})/E(\mathfrak{R}_{\partial})}K_{e}+\sum_{v\in\mathfrak{R}_{\text{int}}}K_{v}~,
\end{equation}
where each term in Eq.~(\ref{eq:mod_decomp_annuli_1}) is the modular Hamiltonian obtained from the ground state $\sigma$. The left hand side is, in general, \emph{not} equal to the modular Hamiltonian of $\mathfrak{R}$, which we denote as $K_{\mathfrak{R}}(\sigma)$. (For clarity, here we will make the dependence of the modular Hamiltonian to $\sigma$ explicit by denoting it as $K_{\mathfrak{R}}(\sigma)$ instead of $K_{\mathfrak{R}}$.) The ``Gibbs state'' of the Hamiltonian $K_{\mathfrak{R}}'$, \emph{i.e.,}
\begin{equation}
	\rho^{\text{max}}_{\mathfrak{R}}:=e^{-K_{\mathfrak{R}}'} \label{eq:annulus_mod_decomp}
\end{equation}
is actually the maximum-entropy state on $\mathfrak{R}$ among all the states that are locally indistinguishable from the ground state $\sigma$~\cite{SKK2019}.

Let us consider the Levin-Wen partition~\cite{Levin2006} of the annulus $\mathfrak{R}=XYZ$, shown in Fig.~\ref{fig:annulus}. One can verify with Eq.~\eqref{eq:mod_decomp}, that
\begin{equation}
	K_{\mathfrak{R}}'=K_{XY}(\sigma) + K_{YZ}(\sigma)- K_Y(\sigma).
\end{equation}
This is because $XY$ and $YZ$ are both disks, and that $Y$ is a union of two disks on which $\sigma_Y$ factorizes due to Eq.~\eqref{eq:area_law}. A simple calculation shows that 
\begin{equation}\label{eq:maxandgs}
	\begin{aligned}
			S(\sigma_{\mathfrak{R}}\|\rho^{\text{max}}_{\mathfrak{R}})&= I(X:Z|Y)_\sigma\,\\
			&= 2\gamma,
	\end{aligned}
\end{equation}
where $S(\rho\|\sigma):=\Tr\rho(\ln \rho-\ln \sigma)$ is the relative entropy. The $\gamma$ in the second line is the topological entanglement entropy. 

When the state $\sigma$ has zero topological entanglement entropy, Eq.~\eqref{eq:maxandgs} implies $S(\sigma_{\mathfrak{R}}\|\rho^{\text{max}}_{\mathfrak{R}})=0$, and thus the modular Hamiltonian is local: $K_{\mathfrak{R}}(\sigma)=K'_{\mathfrak{R}}$. 
However, if $\sigma$ has nonzero topological entanglement entropy, $S(\sigma_{\mathfrak{R}}\|\rho^{\text{max}}_{\mathfrak{R}})>0$, \emph{i.e.}, $\sigma_{\mathfrak{R}}$ and $\rho^{\text{max}}_{\mathfrak{R}}$ are different states. Moreover, we see that $K_{\mathfrak{R}}(\sigma)$ cannot have \emph{any} local decomposition in this case. This is because ${\rho}^{\text{max}}_{\mathfrak{R}}$ is the closest state to $\sigma_{\mathfrak{R}}$ in the set of the Gibbs states defined by all local Hamiltonians~\cite{Kato2015,Kato2019} (see Theorem 3 of Ref.~\cite{Kato2019} in particular).

\section{Modular current}
\label{sec:modular_current}

Because the modular Hamiltonian for a disk $\mathfrak{D}$ is local, one can formally view it as a local Hamiltonian. Then, the reduced density matrix of the ground state, $\sigma_{\mathfrak{D}}=e^{-K_{\mathfrak{D}}}$, can be viewed as a thermal state of this local Hamiltonian at temperature $T=1$. In this Section, we shall study the ``energy current'' of this Hamiltonian on this ``thermal state". 

Let us review general facts about energy current in the many-body context. Let $H=\sum_u h_u$ be a many-body Hamiltonian where the index $u$ represents a site in a set $\Lambda$ and $h_u$ is a self-adjoint operator that acts nontrivially on a ball of bounded radius centered at $u$. The energy current of this Hamiltonian at temperature $T$, from $u$ to $v$, is defined as $i \text{Tr}(e^{-H/T}Z^{-1} [h_u, h_v])$, where $Z=\text{Tr}(e^{-H/T})$ is the partition function~\cite{Kitaev2006a}. More generally, the energy current from $A$ to $B$, (where $A,B\subset \Lambda$ and $A\cap B= \emptyset$) is defined as $\sum_{u\in A,v \in B}\text{Tr}(e^{-H/T}Z^{-1} [h_u, h_v])$. There are two immediate facts that follow from this definition. First, the energy current between two sufficiently well-separated sites vanishes. This is simply because $h_u$ and $h_v$ will be acting nontrivially on two disjoint sets if $u$ and $v$ are sufficiently far apart, leading to $[h_u, h_v]=0$. Second, the energy current is conserved, \emph{i.e.,}
\begin{equation}
\begin{aligned}
    \sum_{u} i \text{Tr}(e^{-H/T}Z^{-1} [h_u, h_v])  &= \text{Tr}(e^{-H/T}Z^{-1} [H, h_v]) \\
    &=0,
\end{aligned}
\label{eq:mod_current_conservation}
\end{equation}
where we used the cyclicity of the trace in the last step.

The energy current associated with the modular Hamiltonian can be defined in an analogous way. We shall refer to this ``energy current" as the \emph{modular current}. Concretely, using  the decomposition $K_{\mathfrak{D}}= \sum_{v\in \mathfrak{D}} \widetilde{K}_v^{\mathfrak{D}}$ (see Eq.~\eqref{eq:mod_decomp2_addendum}), we define the modular current from $v$ to $u$ as: 
\begin{equation}
	f^{\mathfrak{D}}_{vu} := i\langle [\widetilde{K}^{\mathfrak{D}}_v, \widetilde{K}^{\mathfrak{D}}_u]\rangle.
\end{equation} 
See Fig.~\ref{fig:triple_point}(a) for an illustration. Because the modular Hamiltonian is formally a local Hamiltonian, the modular current is also local and conserved.

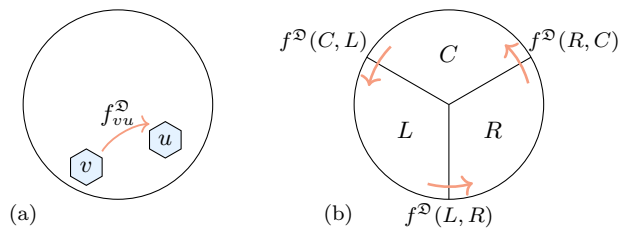
\begin{figure}[h]
	\centering 
	\begin{tikzpicture}
		\begin{scope}[scale=0.42]
			\begin{scope}
				\fill[white] (0:0) circle (3);
				\draw (0,0) circle (3);
				\begin{scope}[yshift=-0.25 cm]
					\diskshadded{1}{-1}{1}{-1}{black}{blue!50!cyan!25!white,opacity=0.4};
					\diskshadded{-2}{-2}{-2}{-2}{black}{blue!50!cyan!25!white,opacity=0.4};
					\node at (2-0.5,-0.866) {$u$};
					\node at (-1,-0.866*2) {$v$};
					\node at (0:0) {{${f^{\mathfrak{D}}_{vu}}$}};
					\draw[->,semithick,line width=0.25mm, color=red!50!orange!93!black!50!white] (-1+0.5,-0.866*2+0.5) .. controls +(60:0.5) and +(180:0.5) .. (2-0.5-0.5,-0.866+0.5);
				\end{scope}
				\node[] (A) at  (-3.0,-3.5) {\footnotesize{(a)}};
			\end{scope}	
		\end{scope}		
		\begin{scope}[xshift=4.4 cm, scale=0.42]
			\begin{scope}
				\fill[white] (0:0) circle (3);
				\draw (0,0) circle (3);
				\draw (0:0)--(30:3);
				\draw (0:0)--(150:3);
				\draw (0:0)--(-90:3);
				
				\node at (90:1.6) {$C$};
				\node at (-30:1.6) {$R$};
				\node at (-150:1.6) {$L$};
				
				\node at  (27:4.45)  {\footnotesize{$f^{\mathfrak{D}}(R,C)$}};
				\node at  (153:4.45) {\footnotesize{$f^{\mathfrak{D}}(C,L)$}};
				\node at  (270:3.5) {\footnotesize{$f^{\mathfrak{D}}(L,R)$}};
				
				\node[] (A) at (-3.5,-3.5) {\footnotesize{(b)}};
				\draw[->,semithick,line width=0.35mm, color=red!50!orange!93!black!50!white] (0+15:2.6) arc (0+15:60-13:2.6);
				\draw[->,semithick,line width=0.35mm, color=red!50!orange!93!black!50!white] (120+15:2.6) arc (120+15:180-13:2.6);
				\draw[->,semithick,line width=0.35mm, color=red!50!orange!93!black!50!white] (240+15:2.6) arc (240+15:300-13:2.6);
			\end{scope}	
		\end{scope}			
	\end{tikzpicture}
	\caption{(a) Two sites within a disk, $v,u\subset \mathfrak{D}$, and the modular current from $v$ to $u$. (b) The disk $\mathfrak{D}$ is divided into $L$, $R$, and the remainder $C=\mathfrak{D}\setminus (LR)$. $f^{\mathfrak{D}}(L,R) =f^{\mathfrak{D}}(R,C) =f^{\mathfrak{D}}(C,L)$ follows from the conservation.}
	\label{fig:triple_point}
\end{figure}

More generally, we shall define the modular current from subsystem $L$ to subsystem $R$ (such that $L,R\subset \mathfrak{D}$ and $L \cap R=\emptyset$) as
\begin{equation}
	f^{\mathfrak{D}}(L,R) :=\sum_{v\in L} \sum_{u \in R} f^{\mathfrak{D}}_{vu}.
\end{equation}
It follows directly from the definition that the modular current satisfies the following useful properties:
\begin{equation}\label{eq:fLR_properties}
	\begin{aligned}
		f(L,R)&=-f(R,L),\\
		f^{\mathfrak{D}}(L_1 L_2,R)&=  f^{\mathfrak{D}}(L_1,R)+f^{\mathfrak{D}}(L_2,R),\\
		f^{\mathfrak{D}}(L,R_1R_2)&= f^{\mathfrak{D}}(L,R_1)+f^{\mathfrak{D}}(L,R_2),\\
		f^{\mathfrak{D}}(L, \mathfrak{D}\setminus L)&=0,\\
		f^{\mathfrak{D}}(L,R) =f^{\mathfrak{D}}(R,C) &=f^{\mathfrak{D}}(C,L), \textrm{ for any }LRC=\mathfrak{D}.
	\end{aligned}
\end{equation}
See Fig.~\ref{fig:triple_point}(b) for one choice of $L,R,C$ for the last case.

What we discussed so far are general properties of the modular current (so long as the modular Hamiltonian admits a local decomposition). However, more can be said about the modular current by utilizing the discussion in Section~\ref{sec:modular_commutator}, \ref{sec:area_law} and \ref{sec:locality_modular_operator}. Remarkably, any modular current between two subsystems can be determined by a modular current between a \emph{single} pair of sites multiplied by a rational number determined by the geometry of the subsystems; see Section~\ref{subsection:simplify}. Furthermore, the modular current vanishes in the bulk, leading to a conserved edge current flowing along the edge; see Section~\ref{subsection:edge_modular_current}. 

\subsection{Simplifying the modular current}\label{subsection:simplify}

Below, we will provide a series of arguments that make the calculation of the modular current progressively simpler. Each term $\widetilde{K}_{u}^{\mathfrak{D}}$ is a linear combination of modular Hamiltonians acting on faces, edges, and vertices; see Eq.~(\ref{eq:mod_decomp2_addendum}). For the purpose calculating the modular current, the vertex terms are irrelevant. Consider a vertex term associated with a vertex $v$ ($K_v$ of Fig.~\ref{fig:local_terms}) for concreteness. The key point is that $v$ is either a strict subset of $A\subset \Lambda$ or disjoint from $A$. Either way, modular commutator $i\langle [K_v, K_A]\rangle$ vanishes because of Eq.~\eqref{eq:zero_if_emptyset}. 

What about the terms associated with the edges or faces ($K_e$, $K_f$ of Fig.~\ref{fig:local_terms})? Commutators involving some terms of this type are zero. The most trivial case is when their respective supports are disjoint. Another case is when their supports overlap but in such a way that the non-overlapping parts are not adjacent to each other.
For instance, for the two faces $f$ and $f'$ shown in Fig.~\ref{fig:vanishing_mod_comm}, we have quantum Markov chain condition  $I(f\setminus f' : f'\setminus f \vert f \cap f')_{\sigma}=0$, which follows from the area law. Then, by Eq.~\eqref{eq:cmi_Jabc}, we conclude that the modular commutator $i\langle [K_f,K_{f'}]\rangle=J(f\setminus f',f \cap f',f'\setminus f)_{\sigma}=0$.

\begin{figure}[h]
	\centering
\begin{tikzpicture}
	\begin{scope}[scale=0.38]
		\begin{scope}
			\clip (-2-0.25,-0.433) rectangle (10-0.5*8+0.25, 9*0.866-0.433);
			
			\diskshadded{-2}{0}{6}{0}{gray!60!white}{white,opacity=0.4};
			\diskshadded{-1}{1}{0}{1}{gray!60!white}{white,opacity=0.4};
			\diskshadded{1}{1}{4}{1}{blue!50!cyan!30!white}{blue!50!cyan!10!white,opacity=0.4};
			\diskshadded{5}{1}{6}{1}{gray!60!white}{white,opacity=0.4};
			\diskshadded{-1}{2}{0}{2}{gray!60!white}{white,opacity=0.4};
			\diskshadded{1}{2}{5}{2}{blue!50!cyan!30!white}{blue!50!cyan!10!white,opacity=0.4};
			\diskshadded{6}{2}{7}{2}{gray!60!white}{white,opacity=0.4};
			\diskshadded{0}{3}{0}{3}{gray!60!white}{white,opacity=0.4};
			\diskshadded{1}{3}{6}{3}{blue!50!cyan!30!white}{blue!50!cyan!10!white,opacity=0.4};
			\diskshadded{7}{3}{7}{3}{gray!60!white}{white,opacity=0.4};
			\diskshadded{0}{4}{0}{4}{gray!60!white}{white,opacity=0.4};
			\diskshadded{1}{4}{7}{4}{blue!50!cyan!30!white}{blue!50!cyan!10!white,opacity=0.4};
			\diskshadded{8}{4}{8}{4}{gray!60!white}{white,opacity=0.4};
			\diskshadded{1}{5}{1}{5}{gray!60!white}{white,opacity=0.4};
			\diskshadded{2}{5}{7}{5}{blue!50!cyan!30!white}{blue!50!cyan!10!white,opacity=0.4};
			\diskshadded{8}{5}{8}{5}{gray!60!white}{white,opacity=0.4};
			\diskshadded{1}{6}{2}{6}{gray!60!white}{white,opacity=0.4};
			\diskshadded{3}{6}{7}{6}{blue!50!cyan!30!white}{blue!50!cyan!10!white,opacity=0.4};
			\diskshadded{8}{6}{9}{6}{gray!60!white}{white,opacity=0.4};
			\diskshadded{2}{7}{3}{7}{gray!60!white}{white,opacity=0.4};
			\diskshadded{4}{7}{7}{7}{blue!50!cyan!30!white}{blue!50!cyan!10!white,opacity=0.4};
			\diskshadded{8}{7}{9}{7}{gray!60!white}{white,opacity=0.4};
			\diskshadded{2}{8}{10}{8}{gray!60!white}{white,opacity=0.4};
			
			\draw[line width=0.8 pt] (30:0.578) -- (90:0.578) 
			-- ++(90:0.578)
			-- ++(150:0.578)-- ++(90:0.578) -- ++(150:0.578)-- ++(90:0.578) -- ++(150:0.578)-- ++(90:0.578) 
			-- ++(30:0.578)--++(90:0.578) -- ++(30:0.578)--++(90:0.578) -- ++(30:0.578)--++(90:0.578) -- ++(30:0.578)
			--++(-30:0.578) -- ++(30:0.578)--++(-30:0.578) -- ++(30:0.578)--++(-30:0.578) -- ++(30:0.578)--++(-30:0.578)
			-- ++(-90:0.578)--++(-30:0.578) -- ++(-90:0.578)--++(-30:0.578) -- ++(-90:0.578)--++(-30:0.578) -- ++(-90:0.578)
			--++(-150:0.578)--++(-90:0.578)--++(-150:0.578)--++(-90:0.578)--++(-150:0.578)--++(-90:0.578)--++(-150:0.578)
			--++(150:0.578)--++(210:0.578)--++(150:0.578)--++(210:0.578)--++(150:0.578)--++(210:0.578)--cycle;
			
			\coloredkf{green!60!yellow!50!white,opacity=0.5}{0}{2}{3};
			
			\node[]  at (2, -1.3){{(b)}};
		\end{scope}
		\draw[] (-2-0.25,-0.433) rectangle (10-0.5*8+0.25, 9*0.866-0.433);
		\node[]  at (1, 4*0.866-0.433){$f$};
			
	\end{scope}
	
	\begin{scope}[xshift= 4 cm, scale=0.38]
		\begin{scope}
			\clip (-2-0.25,-0.433) rectangle (10-0.5*8+0.25, 9*0.866-0.433);
			
			\diskshadded{-2}{0}{6}{0}{gray!60!white}{white,opacity=0.4};
			\diskshadded{-1}{1}{0}{1}{gray!60!white}{white,opacity=0.4};
			\diskshadded{1}{1}{4}{1}{blue!50!cyan!30!white}{blue!50!cyan!10!white,opacity=0.4};
			\diskshadded{5}{1}{6}{1}{gray!60!white}{white,opacity=0.4};
			\diskshadded{-1}{2}{0}{2}{gray!60!white}{white,opacity=0.4};
			\diskshadded{1}{2}{5}{2}{blue!50!cyan!30!white}{blue!50!cyan!10!white,opacity=0.4};
			\diskshadded{6}{2}{7}{2}{gray!60!white}{white,opacity=0.4};
			\diskshadded{0}{3}{0}{3}{gray!60!white}{white,opacity=0.4};
			\diskshadded{1}{3}{6}{3}{blue!50!cyan!30!white}{blue!50!cyan!10!white,opacity=0.4};
			\diskshadded{7}{3}{7}{3}{gray!60!white}{white,opacity=0.4};
			\diskshadded{0}{4}{0}{4}{gray!60!white}{white,opacity=0.4};
			\diskshadded{1}{4}{7}{4}{blue!50!cyan!30!white}{blue!50!cyan!10!white,opacity=0.4};
			\diskshadded{8}{4}{8}{4}{gray!60!white}{white,opacity=0.4};
			\diskshadded{1}{5}{1}{5}{gray!60!white}{white,opacity=0.4};
			\diskshadded{2}{5}{7}{5}{blue!50!cyan!30!white}{blue!50!cyan!10!white,opacity=0.4};
			\diskshadded{8}{5}{8}{5}{gray!60!white}{white,opacity=0.4};
			\diskshadded{1}{6}{2}{6}{gray!60!white}{white,opacity=0.4};
			\diskshadded{3}{6}{7}{6}{blue!50!cyan!30!white}{blue!50!cyan!10!white,opacity=0.4};
			\diskshadded{8}{6}{9}{6}{gray!60!white}{white,opacity=0.4};
			\diskshadded{2}{7}{3}{7}{gray!60!white}{white,opacity=0.4};
			\diskshadded{4}{7}{7}{7}{blue!50!cyan!30!white}{blue!50!cyan!10!white,opacity=0.4};
			\diskshadded{8}{7}{9}{7}{gray!60!white}{white,opacity=0.4};
			\diskshadded{2}{8}{10}{8}{gray!60!white}{white,opacity=0.4};
			
			\draw[line width=0.8 pt] (30:0.578) -- (90:0.578) 
			-- ++(90:0.578)
			-- ++(150:0.578)-- ++(90:0.578) -- ++(150:0.578)-- ++(90:0.578) -- ++(150:0.578)-- ++(90:0.578) 
			-- ++(30:0.578)--++(90:0.578) -- ++(30:0.578)--++(90:0.578) -- ++(30:0.578)--++(90:0.578) -- ++(30:0.578)
			--++(-30:0.578) -- ++(30:0.578)--++(-30:0.578) -- ++(30:0.578)--++(-30:0.578) -- ++(30:0.578)--++(-30:0.578)
			-- ++(-90:0.578)--++(-30:0.578) -- ++(-90:0.578)--++(-30:0.578) -- ++(-90:0.578)--++(-30:0.578) -- ++(-90:0.578)
			--++(-150:0.578)--++(-90:0.578)--++(-150:0.578)--++(-90:0.578)--++(-150:0.578)--++(-90:0.578)--++(-150:0.578)
			--++(150:0.578)--++(210:0.578)--++(150:0.578)--++(210:0.578)--++(150:0.578)--++(210:0.578)--cycle;
			
			\coloredkf{green!60!yellow!50!white,opacity=0.5}{1}{1}{2};
			
			\node[]  at (2, -1.3){{(b)}};
		\end{scope}
		\draw[] (-2-0.25,-0.433) rectangle (10-0.5*8+0.25, 9*0.866-0.433);
		\node[]  at (0, 3*0.866-0.433){$f'$};
	\end{scope}
\end{tikzpicture}
	\caption{Two green faces $f, f' \subset \mathfrak{D}$, for which the corresponding modular commutator $i\langle [K_f,K_{f'}]\rangle$ vanishes.
	}
	\label{fig:vanishing_mod_comm}
\end{figure}
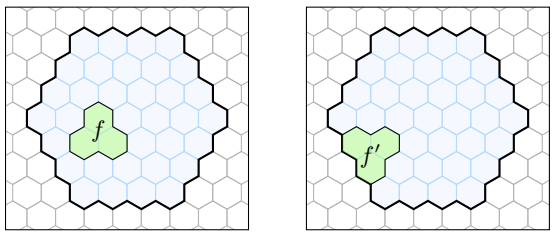

Thus, the modular commutator is nonzero only if the ground state of the subsystems associated with the respective modular Hamiltonians do not define a quantum Markov chain, in a way discussed above. Furthermore, by the topological invariance of the modular commutator, all these nonzero terms are equal up to a sign. Thus, the commutators of any of the terms ($K_f,K_e,K_v$) attain a value of either $0$ or $\pm J$, where 
\begin{equation}\label{eq:alpha_def}
	J:=J(u,v,w)_{\sigma} \quad\text{on a face of the form}\quad    \tikz[scale=0.4,baseline=5ex]{	
		\coloredboxhx{white,opacity=0.5}{6}{2}{7}{2};
		\coloredboxhx{white,opacity=0.5}{7}{3}{7}{3};
		\node at (5,1.73) {$w$};
		\node at (5.5,2.55) {$v$};
		\node at (6,1.73) {$u$};
	}~.
\end{equation}
Let us emphasize that this expression is independent of the face we are choosing. This is because the modular commutator is a topological invariant, as discussed in Section~\ref{sec:topological_invariance}.

\subsection{Edge modular current}\label{subsection:edge_modular_current}

In this section, we observe that the modular current flows along the edge and vanishes in the bulk. This leads to a well-defined edge modular current; we further calculate its precise value: $I_{\sigma}= J/4$ in Eq.~(\ref{eq:edge_1/4J}). After that, we arrive at our formula of chiral central charge by a physical argument. 

First, as we discussed,  the modular current $f^{\mathfrak{D}}_{vu}$ is both conserved and local. On the coarse-grained geometry, the locality is explicit : every individual $\widetilde{K}_{u}^{\mathfrak{D}}$ acts nontrivially on a ball of radius $1$. Therefore, the modular current from $u$ to $v$ can be nonzero only if their distance is less or equal to $2$. We shall only consider those cases.

Let us begin with the modular current between a pair of sites in the bulk, \emph{i.e.,} $\mathfrak{D}_{\text{int}}$. Using two facts we have already established, it is easy to show that the modular current vanishes between any two such sites: (i) modular commutator $J(A,B,C)$ is antisymmetric under the exchange of $A$ and $C$ and (ii) $J(A,B,C)$ is invariant under smooth deformation. The key point is that one can express the modular current as a sum of modular commutators wherein the $A$ and $C$ appear twice in the sum, with their order exchanged. Naturally, the sum vanishes and so does the modular current. In summary,
\begin{equation}\label{eq:vanishes_in_bulk}
	f_{vu}^{\mathfrak{D}} =0, \quad \forall v,u \subset \mathfrak{D}_{\text{int}}.
\end{equation}

Together with the locality and conservation properties of the modular current, Eq.~(\ref{eq:vanishes_in_bulk}) yields a conserved \emph{edge modular current} flows continuously along the edge.  Moreover, deforming part of the edge cannot change the current on another portion of the edge; this can be seen explicitly from the
local decomposition \eqref{eq:mod_decomp2_addendum}. Therefore, the edge modular current must be invariant under all deformations of disk $\mathfrak{D}$, since such deformations can be built up from local deformations. For this reason, we shall use $I_{\sigma}$ to denote the edge modular current, dropping the index $\mathfrak{D}$. 

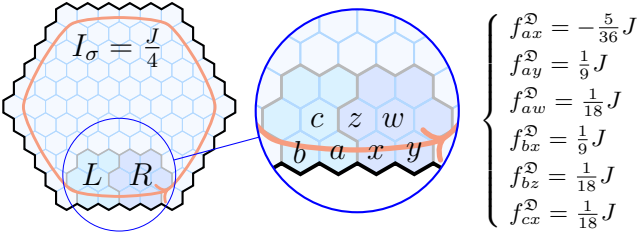
\begin{figure}[h]
	\centering
	\begin{tikzpicture}
		\begin{scope}[scale=0.8]
			\begin{scope}[scale=0.35,spy using outlines={circle, magnification=1.8, size=1 cm, connect spies}]
				
				\diskshadded{1}{1}{6}{1}{blue!50!cyan!30!white}{blue!50!cyan!10!white,opacity=0.4};
				\diskshadded{2}{1}{3}{1}{blue!50!cyan!30!white}{blue!30!cyan!30!white,opacity=0.4};
				\diskshadded{4}{1}{5}{1}{blue!50!cyan!30!white}{blue!70!cyan!30!white,opacity=0.4};
				\diskshadded{1}{2}{7}{2}{blue!50!cyan!30!white}{blue!50!cyan!10!white,opacity=0.4};
				\diskshadded{2}{2}{3}{2}{blue!50!cyan!30!white}{blue!30!cyan!30!white,opacity=0.4};
				\diskshadded{4}{2}{6}{2}{blue!50!cyan!30!white}{blue!70!cyan!30!white,opacity=0.4};
				\diskshadded{1}{3}{8}{3}{blue!50!cyan!30!white}{blue!50!cyan!10!white,opacity=0.4};
				\diskshadded{3}{3}{4}{3}{blue!50!cyan!30!white}{blue!30!cyan!30!white,opacity=0.4};
				\diskshadded{5}{3}{6}{3}{blue!50!cyan!30!white}{blue!70!cyan!30!white,opacity=0.4};
				\diskshadded{1}{4}{9}{4}{blue!50!cyan!30!white}{blue!50!cyan!10!white,opacity=0.4};
				\diskshadded{1}{5}{10}{5}{blue!50!cyan!30!white}{blue!50!cyan!10!white,opacity=0.4};
				\diskshadded{1}{6}{11}{6}{blue!50!cyan!30!white}{blue!50!cyan!10!white,opacity=0.4};
				
				\diskshadded{2}{7}{11}{7}{blue!50!cyan!30!white}{blue!50!cyan!10!white,opacity=0.4};
				\diskshadded{3}{8}{11}{8}{blue!50!cyan!30!white}{blue!50!cyan!10!white,opacity=0.4};
				\diskshadded{4}{9}{11}{9}{blue!50!cyan!30!white}{blue!50!cyan!10!white,opacity=0.4};
				\diskshadded{5}{10}{11}{10}{blue!50!cyan!30!white}{blue!50!cyan!10!white,opacity=0.4};
				\diskshadded{6}{11}{11}{11}{blue!50!cyan!30!white}{blue!50!cyan!10!white,opacity=0.4};

				\draw[line width=0.6 pt, color=gray!55!white] (30:0.578) -- ++(30:0.578) 
				-- ++(90:0.578) -- ++(150:0.578)-- ++(90:0.578) -- ++(30:0.578)-- ++(90:0.578) -- ++(30:0.578) -- ++(-30:0.578)-- ++(30:0.578)
				-- ++(-30:0.578)-- ++(30:0.578)-- ++(-30:0.578)-- ++(30:0.578)
				--++(-30:0.578)--++(-90:0.578)--++(-30:0.578)--++(-90:0.578)--++(-150:0.578)--++(-90:0.578);

				\draw[line width=0.6 pt, color=gray!55!white] (30:0.578) ++ (0:2)++(30:0.578)--++ (90:0.578)--++(150:0.578)--++(90:0.578)--++(30:0.578)--++(90:0.578);
				
				\draw[line width=0.8 pt] (30:0.578) -- (90:0.578) 
				-- ++(90:0.578) -- ++(150:0.578)-- ++(90:0.578) -- ++(150:0.578)-- ++(90:0.578) -- ++(150:0.578)-- ++(90:0.578) -- ++(150:0.578)-- ++(90:0.578) -- ++(150:0.578)
				--++(90:0.578) -- ++(30:0.578)--++(90:0.578) -- ++(30:0.578)--++(90:0.578) -- ++(30:0.578)--++(90:0.578) -- ++(30:0.578)--++(90:0.578) -- ++(30:0.578)--++(90:0.578) -- ++(30:0.578)
				--++(-30:0.578) -- ++(30:0.578)--++(-30:0.578) -- ++(30:0.578)--++(-30:0.578) -- ++(30:0.578)--++(-30:0.578) -- ++(30:0.578)--++(-30:0.578) -- ++(30:0.578)--++(-30:0.578) 
				-- ++(-90:0.578)--++(-30:0.578) -- ++(-90:0.578)--++(-30:0.578) -- ++(-90:0.578)--++(-30:0.578) -- ++(-90:0.578)--++(-30:0.578) -- ++(-90:0.578)--++(-30:0.578) 
				--++(-90:0.578)--++(-150:0.578)--++(-90:0.578)--++(-150:0.578)--++(-90:0.578)--++(-150:0.578)--++(-90:0.578)--++(-150:0.578)--++(-90:0.578)--++(-150:0.578)--++(-90:0.578)--++(-150:0.578)
				--++(150:0.578)--++(210:0.578)--++(150:0.578)--++(210:0.578)--++(150:0.578)--++(210:0.578)--++(150:0.578)--++(210:0.578)--++(150:0.578)--++(210:0.578)--cycle;
				
				\begin{scope}[xshift=3 cm, yshift=5.18 cm]
					\draw[->, line width=1.1 pt, color=red!50!orange!93!black!50!white] 
					(-60:4.55) .. controls +(30:0.5) and +(-120:0.5) ..
					(-30:4.25) .. controls +(60:0.5) and +(-90:0.5) ..       
					(0:4.55) .. controls +(90:0.5) and +(-60:0.5) ..
					(30:4.25) .. controls +(120:0.5) and +(-30:0.5) ..		
					(60:4.55) .. controls +(150:0.5) and +(0:0.5) ..
					(90:4.25) .. controls +(180:0.5) and +(30:0.5) ..		
					(120:4.55) .. controls +(210:0.5) and +(60:0.5) ..
					(150:4.25) .. controls +(240:0.5) and +(90:0.5) ..		
					(180:4.55) .. controls +(270:0.5) and +(120:0.5) ..
					(210:4.25) .. controls +(300:0.5) and +(150:0.5) ..		
					(240:4.55) .. controls +(330:0.5) and +(180:0.5) ..
					(270:4.25) .. controls +(0:0.5) and +(210:0.5) ..
					(301:4.55);
				\end{scope}
				\node[]  at (2.9, 8){\large{${I_{\sigma}}=\frac{J}{4}$}};
				
				\spy [blue, size=2.7 cm] on (0.84, 0.55) in node [] at (14.3,5.0);
				
			\end{scope}	
			\begin{scope}
				\node[]  at (1.4,0.7){\large{${R}$}};
				\node[]  at (0.6, 0.7){\large{${L}$}};
			\end{scope}

			\begin{scope}[scale=0.35, xshift=18.2 cm, yshift=8.4 cm, scale=0.91]	
				\node[]  at (-7.5+0.2,-5.9) {\large{$b$}};
				\node[]  at (-5.5+0.2,-5.9) {\large{$a$}};
				\node[]  at (-6.6+0.2,-5.9+0.578+1.1) {\large{$c$}};
				
				\node[]  at (-3.5+0.2,-5.9) {\large{$x$}};
				\node[]  at (-1.5+0.2,-5.9) {\large{$y$}};
				\node[]  at (-4.6+0.2,-5.9+0.578+1.1) {\large{$z$}};
				\node[]  at (-2.6+0.2,-5.9+0.578+1.1) {\large{$w$}};

				\begin{scope}[xshift=-4cm,yshift=0.25 cm]
					\node[]  at (10.5,-4.5) {$\left\{\begin{array}{l}
							f^{\mathfrak{D}}_{ax}=-\frac{5}{36}J\\[3pt]
							f^{\mathfrak{D}}_{ay}=\frac{1}{9}J\\[3pt]
							f^{\mathfrak{D}}_{aw}=\frac{1}{18}J\\[3pt]
							f^{\mathfrak{D}}_{bx}=\frac{1}{9}J\\[3pt]
							f^{\mathfrak{D}}_{bz}=\frac{1}{18}J\\[3pt]
							f^{\mathfrak{D}}_{cx}=\frac{1}{18}J		
						\end{array}\right.$};
				\end{scope}
			\end{scope}	
		\end{scope}
	\end{tikzpicture}
	\caption{Summary of all the nonzero modular currents that contribute to the edge modular current $I_{\sigma}$.}
	\label{fig:edge_modular_current}
\end{figure}

To be concrete, let us take the following formal definition of the edge modular current:
\begin{equation}
	\begin{aligned}
		I_{\sigma} := f^{\mathfrak{D}}(L,R),
	\end{aligned}
\end{equation}
where $L$ and $R$ are subsystems of $\mathfrak{D}$ such that they are sufficiently large and that the triple point that $L$, $R$, and the remainder meets together is sufficiently far away from the edge; see Fig.~\ref{fig:triple_point}(b). 
The deformation of $L$ and $R$ in the bulk does not change the modular current $f^{\mathfrak{D}}(L,R)$ because the contribution from any pair of bulk sites vanishes. Deformation along the edge also does not change anything, because the modular current is conserved and local. This is why $f^{\mathfrak{D}}(L,R)$ represents a well-defined notion of current along the edge, independent of the fine-grained details on the shapes of $L$ and $R$.

In practice, the choice of $L$ and $R$ in Fig.~\ref{fig:edge_modular_current} (as small as radius 1) is already a valid choice.  The calculation of $I_{\sigma}$ can be done straightforwardly, again using the topological invariance of the modular commutator. All nonzero contributions are summarized in Fig.~\ref{fig:edge_modular_current}, leading to
\begin{equation}\label{eq:edge_1/4J}
		I_{\sigma} = \frac{1}{4}J.
\end{equation}
As it stands, the edge modular current is indeed insensitive to the choice of disk $\mathfrak{D}$,  consistent with our expectation.

Now, using a physical argument explained in \cite{Short}, we can calculate the value of $J$. The idea is to interpret the modular Hamiltonian of the disk as some local Hamiltonian. Viewed this way, the modular current is the energy current of this local Hamiltonian at a ``temperature'' of $T=1$. Comparing this expression with Eq.~\eqref{eq:edge_current}, we obtain
\begin{equation}
    J= \frac{\pi}{3} c_-.
\end{equation}

\subsection{Edge modular current, with a bulk anyon}
Consider a disk containing an anyon, whose quantum state is $\sigma_{\mathfrak{D}}^{[a]}$ (instead of the ground state $\sigma$); see Fig.~\ref{fig:anyon_edge_current} for an illustration. It is expected that the presence of an anyon can modify the entanglement spectrum of the disk.\footnote{This modification of entanglement spectrum must happen for non-Abelian anyons whose quantum dimension $d_a >1$. This can be seen from the entropy difference Eq.~\eqref{eq:entropy_quantum_dimension_correction}.}
Nonetheless, as we shall see below, the edge modular current turns out to be well-defined and the value of this current is not affected by the presence of the anyon; see Eq.~(\ref{eq:a_and_1}). Recall that we have already established that the modular Hamiltonian of $\sigma_{\mathfrak{D}}^{[a]}$ is local; see Section~\ref{ss:anyon_disk}.

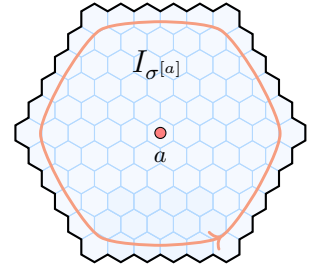
\begin{figure}[h]
	\centering
	\begin{tikzpicture}
		\begin{scope}
			\begin{scope}[scale=0.35]
				
				\diskshadded{1}{1}{6}{1}{blue!50!cyan!30!white}{blue!50!cyan!10!white,opacity=0.4};
				\diskshadded{2}{1}{3}{1}{blue!50!cyan!30!white}{blue!50!cyan!10!white,opacity=0.4};
				\diskshadded{4}{1}{5}{1}{blue!50!cyan!30!white}{blue!50!cyan!10!white,opacity=0.4};
				\diskshadded{1}{2}{7}{2}{blue!50!cyan!30!white}{blue!50!cyan!10!white,opacity=0.4};
				\diskshadded{2}{2}{3}{2}{blue!50!cyan!30!white}{blue!50!cyan!10!white,opacity=0.4};
				\diskshadded{4}{2}{6}{2}{blue!50!cyan!30!white}{blue!50!cyan!10!white,opacity=0.4};
				\diskshadded{1}{3}{8}{3}{blue!50!cyan!30!white}{blue!50!cyan!10!white,opacity=0.4};
				\diskshadded{3}{3}{4}{3}{blue!50!cyan!30!white}{blue!50!cyan!10!white,opacity=0.4};
				\diskshadded{5}{3}{6}{3}{blue!50!cyan!30!white}{blue!50!cyan!10!white,opacity=0.4};
				\diskshadded{1}{4}{9}{4}{blue!50!cyan!30!white}{blue!50!cyan!10!white,opacity=0.4};
				\diskshadded{1}{5}{10}{5}{blue!50!cyan!30!white}{blue!50!cyan!10!white,opacity=0.4};
				\diskshadded{1}{6}{11}{6}{blue!50!cyan!30!white}{blue!50!cyan!10!white,opacity=0.4};
				
				\diskshadded{2}{7}{11}{7}{blue!50!cyan!30!white}{blue!50!cyan!10!white,opacity=0.4};
				\diskshadded{3}{8}{11}{8}{blue!50!cyan!30!white}{blue!50!cyan!10!white,opacity=0.4};
				\diskshadded{4}{9}{11}{9}{blue!50!cyan!30!white}{blue!50!cyan!10!white,opacity=0.4};
				\diskshadded{5}{10}{11}{10}{blue!50!cyan!30!white}{blue!50!cyan!10!white,opacity=0.4};
				\diskshadded{6}{11}{11}{11}{blue!50!cyan!30!white}{blue!50!cyan!10!white,opacity=0.4};

				\draw[line width=0.8 pt] (30:0.578) -- (90:0.578) 
				-- ++(90:0.578) -- ++(150:0.578)-- ++(90:0.578) -- ++(150:0.578)-- ++(90:0.578) -- ++(150:0.578)-- ++(90:0.578) -- ++(150:0.578)-- ++(90:0.578) -- ++(150:0.578)
				--++(90:0.578) -- ++(30:0.578)--++(90:0.578) -- ++(30:0.578)--++(90:0.578) -- ++(30:0.578)--++(90:0.578) -- ++(30:0.578)--++(90:0.578) -- ++(30:0.578)--++(90:0.578) -- ++(30:0.578)
				--++(-30:0.578) -- ++(30:0.578)--++(-30:0.578) -- ++(30:0.578)--++(-30:0.578) -- ++(30:0.578)--++(-30:0.578) -- ++(30:0.578)--++(-30:0.578) -- ++(30:0.578)--++(-30:0.578) 
				-- ++(-90:0.578)--++(-30:0.578) -- ++(-90:0.578)--++(-30:0.578) -- ++(-90:0.578)--++(-30:0.578) -- ++(-90:0.578)--++(-30:0.578) -- ++(-90:0.578)--++(-30:0.578) 
				--++(-90:0.578)--++(-150:0.578)--++(-90:0.578)--++(-150:0.578)--++(-90:0.578)--++(-150:0.578)--++(-90:0.578)--++(-150:0.578)--++(-90:0.578)--++(-150:0.578)--++(-90:0.578)--++(-150:0.578)
				--++(150:0.578)--++(210:0.578)--++(150:0.578)--++(210:0.578)--++(150:0.578)--++(210:0.578)--++(150:0.578)--++(210:0.578)--++(150:0.578)--++(210:0.578)--cycle;
				
				\draw[fill=red!50] (30:0.578*6)++(90:0.578*6) circle (0.21);
				\node[]  at (3.003, 4.3){$a$};
				\begin{scope}[xshift=3 cm, yshift=5.18 cm]
					\draw[->, line width=1.1 pt, color=red!50!orange!93!black!50!white] 
					(-60:4.55) .. controls +(30:0.5) and +(-120:0.5) ..
					(-30:4.25) .. controls +(60:0.5) and +(-90:0.5) ..       
					(0:4.55) .. controls +(90:0.5) and +(-60:0.5) ..
					(30:4.25) .. controls +(120:0.5) and +(-30:0.5) ..		
					(60:4.55) .. controls +(150:0.5) and +(0:0.5) ..
					(90:4.25) .. controls +(180:0.5) and +(30:0.5) ..		
					(120:4.55) .. controls +(210:0.5) and +(60:0.5) ..
					(150:4.25) .. controls +(240:0.5) and +(90:0.5) ..		
					(180:4.55) .. controls +(270:0.5) and +(120:0.5) ..
					(210:4.25) .. controls +(300:0.5) and +(150:0.5) ..		
					(240:4.55) .. controls +(330:0.5) and +(180:0.5) ..
					(270:4.25) .. controls +(0:0.5) and +(210:0.5) ..
					(301:4.55);
				\end{scope}				
				\node[]  at (2.9, 7.8){\large{$I_{\sigma^{[a]}}$}};
			\end{scope}
		\end{scope}
		\end{tikzpicture}
	\caption{A disk subsystem $\mathfrak{D}\subset \Lambda$ containing an anyon, which is in a quantum state $\sigma^{[a]}_{\mathfrak{D}}$. The edge modular current is denoted as $I_{\sigma^{[a]}}$.}\label{fig:anyon_edge_current}
	\end{figure}

The local decomposition of $\sigma^{[a]}_{\mathfrak{D}}$ and  $\sigma_{\mathfrak{D}}$ only differs in the bulk, in the vicinity of the anyon. This is because they have identical reduced density matrices on any disk $D\subset \mathfrak{D}$ that does not contain the anyon. Thus, the commutators of terms (of the form $\langle i [K^{\mathfrak{D}}_v, K^{\mathfrak{D}}_u]\rangle$) near the boundary of $\mathfrak{D}$ are unaffected. Therefore, it is natural to speculate that the bulk anyon does not change the edge modular current, \emph{i.e.,}
\begin{equation}\label{eq:a_and_1}
	I_{\sigma^{[a]}} 
	{=} I_{\sigma}. 
\end{equation}
As we shall see below, Eq.~(\ref{eq:a_and_1}) is indeed true. However, the argument, as it stands, is incomplete. The rest of this section is dedicated to completing this argument.

The missing piece of the argument is the justification of the fact that the edge modular current with respect to  $\sigma^{[a]}_{\mathfrak{D}}$ (denoted as  $I_{\sigma^{[a]}}$) is well-defined. Recall that the definition of $I_{\sigma}$ rests on three simple properties of the modular current $f_{vu}^{\mathfrak{D}}$ of the ground state $\sigma$: locality, conservation of the current, and the fact that it vanishes in the bulk. We need to justify the same properties for the modular current of the state $\sigma^{[a]}_{\mathfrak{D}}$. Both the locality and the conservation follow straightforwardly from the analysis in Section~\ref{ss:anyon_disk}. However, that the current vanishes in the bulk is less obvious and requires a further analysis.

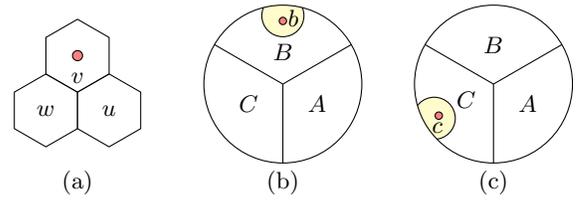
\begin{figure}[h]
	\centering
	\begin{tikzpicture}
		\begin{scope}[scale=0.35]
			\begin{scope}[xshift=-9 cm, yshift=-1 cm, scale=2.4]
				\coloredboxhx{white,opacity=0.5}{0}{0}{1}{0};
				\coloredboxhx{white,opacity=0.5}{1}{1}{1}{1};
				\node at (0,0) {$w$};
				\node at (1,0) {$u$};
				\node at (0.5,0.51) {$v$};
				\draw[fill=red!50] (30:0.578*1)++(90:0.578*1) circle (0.08);
		    \end{scope}
	    \node (a) at (-9+0.5*2.4, -3.75) {(a)};
		
			\begin{scope}
				\begin{scope}
					\clip(0,0) circle (3);
					\fill[yellow, opacity=0.2] (90:2.6) circle (0.8);
					\draw(90:2.6) circle (0.8);
					\draw[fill=red!50] (90:2.4) circle (0.14);
				\end{scope}
				\draw[] (0,0) circle (3);
				\node at (90:1.2) {$B$};
				\node at (-30:1.5) {$A$};
				\node at (-150:1.5) {$C$};
				\draw (0:0)--(30:3);
				\draw (0:0)--(150:3);
				\draw (0:0)--(-90:3);
				\node (a) at (90-9:2.55) {$b$};
				\node (a) at (-90:3.75) {(b)};
			\end{scope}	
		
		\begin{scope}[xshift=8 cm]
			\begin{scope}
				\clip(0,0) circle (3);
				\fill[yellow, opacity=0.2] (210:2.6) circle (0.8);
				\draw(210:2.6) circle (0.8);
				\draw[fill=red!50] (180+30:2.4) circle (0.14);
			\end{scope}
			\draw[] (0,0) circle (3);
			\node at (90:1.5) {$B$};
			\node at (-30:1.5) {$A$};
			\node at (-150:1.2) {$C$};
			\draw (0:0)--(30:3);
			\draw (0:0)--(150:3);
			\draw (0:0)--(-90:3);
			\node (a) at (180+38:2.7) {$c$};
			\node (a) at (-90:3.75) {(c)};
		\end{scope}	
		\end{scope}	
	\end{tikzpicture}
	\caption{An anyon $a$ is located in an appropriate region of a tripartition of a disk. (a) The anyon is contained in $v$. (b) The anyon is contained in $b$ and $b \subset B$. (b) The anyon is contained in $c$ and $c\subset C$.}\label{fig:anyon_deformation}
\end{figure}

With an appropriate coarse-graining, we can always ensure that the anyon resides strictly inside a single unit cell. Therefore, it suffices to consider two types of modular commutators, $J(u,v,w)_{\sigma^{[a]}}$ and $J(w,u,v)_{\sigma^{[a]}}$ with respect to the partitions in Fig.~\ref{fig:anyon_deformation}(a). The argument for the topological invariance of the modular commutator in Section~\ref{sec:topological_invariance} applies here as well, as long as the deformation occurs sufficiently far away from the anyon. This is because the requisite conditional independence relation was derived from \textbf{A1} (see Eq.~\eqref{eq:ssa_saturation}), a \emph{local} condition on $\sigma$. Therefore, all we need to show is that $J(A,B,C)_{\sigma^{[a]}}=J$ for the partition in Fig.~\ref{fig:anyon_deformation} (b) and (c) respectively. 

If we remove the anyon by removing region $b$ in Fig.~\ref{fig:anyon_deformation}(b), or removing region $c$ from (c), then the tripartition of the disk contains no anyon. Therefore, $J(A,B\setminus b, C)= J$ for (b) and $J(A,B,C\setminus c)=J$ for (c), from the topological invariance of $J$. Therefore, the remaining task is to show
\begin{itemize}
	\item $J(A,B, C)_{\sigma^{[a]}}= J(A,B\setminus b, C)_{\sigma^{[a]}}$ for the partition in  Fig.~\ref{fig:anyon_deformation}(b).
	\item $J(A,B, C)_{\sigma^{[a]}}= J(A,B, C\setminus c)_{\sigma^{[a]}}$ for the partition in  Fig.~\ref{fig:anyon_deformation}(c).
\end{itemize}
We can prove both statements, using an argument similar to the one used in Section~\ref{sec:topological_invariance}. A small difference is that the requisite conditional independence condition does not follow immediately from \textbf{A1}. Instead, we need to show $I(b:A \, \vert \, B\setminus b)=I(b:C \, \vert \, B\setminus b)=0$ for the partition in Fig.~\ref{fig:anyon_deformation}(b) and  $I(c:B \, \vert \, C\setminus c)=0$ for the partition in Fig.~\ref{fig:anyon_deformation}(c). Happily, both of these conditions follow straightforwardly from known results~\cite{Kitaev2006,SKK2019}. Thus, the modular current for the state $\sigma^{[a]}_{\mathfrak{D}}$ vanishes in the bulk.
 This completes the derivation of Eq.~(\ref{eq:a_and_1}).

The consequence of this analysis is the physical prediction that a single anyon in the bulk does not modify the magnitude of the edge energy current, when the temperature is low compared to the bulk energy gap and high enough such that the boundary length is large compared to the thermal correlation length. (In other words, formula~(\ref{eq:edge_current}) still holds when there is an anyon in the bulk.) This prediction is consistent with existing literature; see Appendix~D.2 of Ref.~\cite{Kitaev2006a}.

\section{Gapped domain wall}
\label{sec:gapped_domain_wall}
In this Section, we show that two systems which are joined via a gapped domain wall has the same value of $J$. This is a nontrivial evidence that $J$ is indeed proportional to the chiral central charge.

The notion of gapped domain wall we use is based on the entanglement bootstrap program~\cite{Shi2021,Shi2021a}. In the entanglement bootstrap program, one begins with a set of axioms on local patches of the wave function and deduces the logical consequence of these axioms. These axioms are summarized in Fig.~\ref{fig:assumptions}.  Specifically, let $\sigma$ be some reference state, which can be again thought as the ground state of the many-body system. We assume that $(S_C + S_{BC} - S_B)_{\sigma}=0$ (red) and $(S_{BC} + S_{CD} - S_B - S_D)_{\sigma}=0$ (green), both in the bulk and on the domain wall, over arbitrarily large regions that can be smoothly deformed from the shown configurations. The subsystems are allowed to be deformed as long as the boundaries between $B$ and $D$ do not cross the domain wall. We shall refer to the conditions in the bulk and on the wall as the bulk-axioms and the wall-axioms, respectively.  

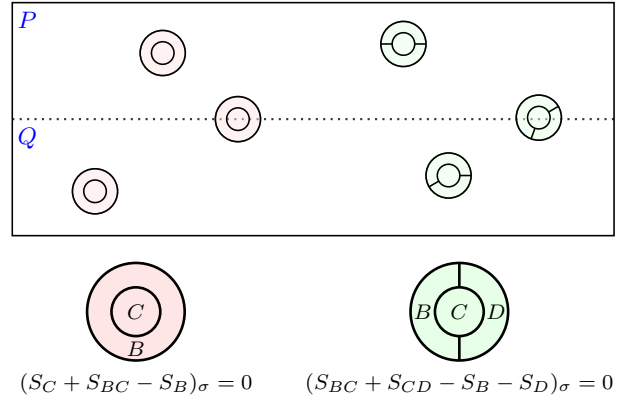
\begin{figure}[h]
	\centering
	\begin{tikzpicture}
		\draw[line width=0.6 pt] (0,-1.55) rectangle (8,1.55);
		\draw[dotted, line width=0.8 pt, opacity=0.8] (0,0) --(8,0);
		\node[right, below] (P) at (0.2, 1.55)  {{\color{blue}$P$}};
		\node[right, below] (Q) at (0.2, 0)  {{\color{blue}$Q$}};
		\axiomone{1.1}{-0.96}
		\axiomone{2}{0.88}
		\axiomone{3}{0}
		\axiomtwoangled{5.2}{1}{0}{180}
		\axiomtwoangled{5.8}{-0.75}{0}{210}
		\axiomtwoangled{7}{0.02}{30}{250}
	\end{tikzpicture}
	\vspace{0.3cm}
	
	\begin{tikzpicture}
		\begin{scope}[scale=0.865]
			\draw[fill=red!10!white, line width=1pt] (0,0) circle (0.75);
			\draw[fill=red!10!white, line width=1pt] (0,0) circle (0.375);
			\node[] (A) at (0,-1.1) {\footnotesize $(S_C + S_{BC} - S_B)_{\sigma}=0$};
			\node[] (B) at (0,-0.5625) {\footnotesize{$B$}};
			\node[] (C) at (0,0) {\footnotesize{$C$}};
		\end{scope}
		
		\begin{scope}[xshift=4.3cm,scale=0.865]
			\draw[fill=green!10!white, line width=1pt] (0,0) circle (0.75);
			\draw[fill=green!10!white, line width=1pt] (0,0) circle (0.375);
			\node[] (A) at (0,-1.1) {\footnotesize $(S_{BC} + S_{CD} - S_B - S_D)_{\sigma}=0$};
			\node[] (B) at (-0.5625,0) {\footnotesize{$B$}};
			\node[] (D) at (0.5625,0) {\footnotesize{$D$}};
			\node[] (C) at (0,0) {\footnotesize{$C$}};
			\draw[line width=1pt] (0,0.375) -- (0,0.75);
			\draw[line width=1pt] (0,-0.375)--(0,-0.75);
		\end{scope}
	\end{tikzpicture}
	\caption{Summary of the domain wall axioms.}
	\label{fig:assumptions}
\end{figure}

\begin{figure}[h]
	\centering
	\begin{tikzpicture}
		\begin{scope}[scale=0.8]
			\draw[line width=0.6 pt] (0,-3) rectangle (4.6,3);
			\draw[dotted,line width=0.8 pt, opacity=0.8] (0,0) --(4.6,0);
			\node at (2.3, -3.4)  {(a)};
			\begin{scope}[scale=0.92,xshift=0.1 cm]
				\node at (0.2,2.95) {\color{blue}$P$};
				\node at (0.2, -0.35)  {{\color{blue}$Q$}};
				\begin{scope}[xshift=2.2 cm, yshift=0 cm]	
					\draw	(-1,0)+(-50:2.5) arc (-50:50:2.5);
					\draw	(1,0)+(130:2.5) arc (130:230:2.5);
					\draw	(-0.2,0)+(-50:2.2) arc (-50:50:2.2);
					
					\node at (-10:1.75) {$A'$};
					\node at (20:0.7) {$D$};
					
					\node at (0,2.6) {$E$};
				\end{scope}	
				
				\begin{scope}[xshift=2.2 cm, yshift=1.3 cm, xscale=0.45, yscale=0.3]	
					\draw[fill=white] (0,0) circle (3);
					\node at (-90:1.5) {$B$};
					\node at (30:1.5) {$A$};
					\node at (150:1.5) {$C$};
					\draw (0:0)--(-30:3);
					\draw (0:0)--(-150:3);
					\draw (0:0)--(90:3);	
				\end{scope}	
				
				\begin{scope}[xshift=2.2 cm, yshift=-1.3 cm, xscale=0.45, yscale=0.3]	
					\draw[fill=white] (0,0) circle (3);
					\node at (90:1.5) {$B$};
					\node at (-30:1.5) {$A$};
					\node at (-150:1.5) {$C$};
					\draw (0:0)--(30:3);
					\draw (0:0)--(150:3);
					\draw (0:0)--(-90:3);	
				\end{scope}	
			\end{scope}	
		\end{scope}	
		
		\begin{scope}[scale=0.8, xshift=5.0 cm]
			\draw[line width=0.6 pt] (0,-3) rectangle (4.6,3);
			\draw[dotted,line width=0.8 pt, opacity=0.8] (0,0) --(4.6,0);
			\node at (2.3, -3.4)  {(b)};
			\begin{scope}[scale=0.92,xshift=0.1 cm]
				\node at (0.2,2.95) {\color{blue}$P$};
				\node at (0.2, -0.35)  {{\color{blue}$Q$}};
				\begin{scope}[xshift=2.2 cm, yshift=0 cm]	
					\draw	(-1,0)+(-50:2.5) arc (-50:50:2.5);
					\draw	(1,0)+(130:2.5) arc (130:230:2.5);
					\draw	(-0.2,0)+(-50:2.2) arc (-50:50:2.2);
					
					\draw	(0,-1)--(0, 1);
					
					\node at (-10:1.75) {$A'$};
					\node at (15:0.9) {$D_2$};
					\node at (165:0.9) {$D_1$};
					
					\node at (0,2.6) {$E$};
				\end{scope}	
				
				\begin{scope}[xshift=2.2 cm, yshift=1.3 cm, xscale=0.45, yscale=0.3]	
					\draw[fill=white] (0,0) circle (3);
					\node at (-90:1.5) {$B$};
					\node at (30:1.5) {$A$};
					\node at (150:1.5) {$C$};
					\draw (0:0)--(-30:3);
					\draw (0:0)--(-150:3);
					\draw (0:0)--(90:3);	
				\end{scope}	
				
				\begin{scope}[xshift=2.2 cm, yshift=-1.3 cm, xscale=0.45, yscale=0.3]	
					\draw[fill=white] (0,0) circle (3);
					\node at (90:1.5) {$B$};
					\node at (-30:1.5) {$A$};
					\node at (-150:1.5) {$C$};
					\draw (0:0)--(30:3);
					\draw (0:0)--(150:3);
					\draw (0:0)--(-90:3);	
				\end{scope}	
				
			\end{scope}
		\end{scope}	
	\end{tikzpicture}
	\caption{Passing through the domain wall.}
	\label{fig:passing}
\end{figure}
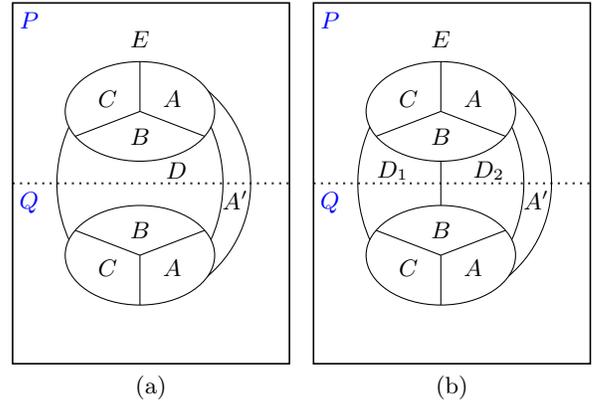

The main result of this Section is the following theorem. 

\begin{theorem}\label{wall_thm}
	The modular commutator calculated from gapped phases $P$ and $Q$ are identical,
	\begin{equation}
		J^P=J^Q,
	\end{equation}
	if the two phases $P$ and $Q$ are separated by a gapped domain wall.
\end{theorem}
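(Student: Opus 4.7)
The plan is to extend the topological-invariance argument of Section~\ref{sec:topological_invariance} across the gapped domain wall. Once this is in place, the equality $J^P=J^Q$ follows by continuously sliding a tripartitioned disk from phase $P$ through the wall into phase $Q$; since $J$ is invariant at every step, the initial value $J^P$ equals the final value $J^Q$.

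First, I would set up the geometry of Fig.~\ref{fig:passing}, placing the two tripartitioned disks symmetrically across the wall and using the auxiliary regions $A'$, $D$, $E$ to connect them. I would then enumerate the elementary deformations needed to slide one tripartition into the other: small modifications of $A$, of $C$, and of $B$ in the style of Figs.~\ref{fig:topo_invariance_1}--\ref{fig:topo_invariance_3}. For each, Section~\ref{sec:topological_invariance} shows that invariance of $J$ reduces to the vanishing of a specific conditional mutual information such as $I(a:B\vert A)_\sigma=0$ or $I(b:A\vert B)_\sigma=0$. My task is to re-derive each such CMI identity using the bulk axioms of phase $P$ or $Q$ together with the wall axioms of Fig.~\ref{fig:assumptions}, which are precisely the statement that the requisite entropy combinations keep vanishing when the red/green regions are allowed to intersect or straddle the wall.

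Second, with the needed CMI identities in hand, the arguments of Eqs.~\eqref{eq:inv_12} and \eqref{eq:inv_22}, along with their purification-based counterparts, carry over verbatim: each elementary deformation crossing the wall leaves $J(A,B,C)_\sigma$ unchanged. Chaining a finite sequence of such deformations transforms the $P$-disk tripartition into the $Q$-disk tripartition through an intermediate configuration that straddles the wall, giving a chain of equalities $J^P=J^{\mathrm{middle}}=J^Q$.

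The main obstacle is the case analogous to Fig.~\ref{fig:topo_invariance_3}, where the obvious Markov condition fails and one must invoke the purification identity $J(A,B,C)_{\vert\psi\rangle}=J(C,D,A)_{\vert\psi\rangle}$ to rotate the subsystem playing the middle role of the modular commutator. When the disk straddles the wall, the purifying region $D$ now has parts in both bulk phases as well as on the wall, so one must verify that the corresponding Markov condition $I(b:D\setminus b\,\vert\,A\setminus b)_{\vert\psi\rangle}=0$ (and its cyclic variants) follows from the combined bulk and wall axioms together with the purity of the global state. This is the step where the full strength of the gapped-domain-wall assumption is consumed, and it is the technical crux of the argument; everything else is a straightforward adaptation of the bulk topological-invariance proof.
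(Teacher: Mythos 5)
Your route is genuinely different from the paper's, and as written it has two real gaps. The first is geometric: the two tripartitions in Fig.~\ref{fig:passing}(a) are \emph{mirror images} of one another (counterclockwise order $A,C,B$ above the wall versus $A,B,C$ below), so they cannot be connected by any continuous deformation that preserves the labels. That orientation reversal is precisely why the paper can write the modular commutator of the combined configuration as $\pm(J^P-J^Q)$ via additivity (Eq.~\eqref{eq:additive}) and axiom \textbf{A0}, reducing the theorem to showing that $J(A,B,C)_{\sigma}$ \emph{vanishes} there. If you instead slide a single, orientation-preserving tripartition from $P$ to $Q$, the target is not the lower disk of Fig.~\ref{fig:passing}(a), and the auxiliary regions $A'$, $D$, $E$ play no role in such an argument; your setup paragraph conflates the two pictures.

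The second gap is that the entire technical content of your strategy --- verifying that every conditional-mutual-information identity needed for the elementary deformations continues to hold once the tripartition straddles the wall --- is deferred rather than proved. This is not routine: the wall axiom \textbf{A1} is only assumed for annuli whose $B$--$D$ boundaries do not cross the wall, so for each deformation (especially the hard cases of Fig.~\ref{fig:topo_invariance_3}, where the complement of $ABC$ now contains pieces of both phases and of the wall) you must exhibit an admissible annulus configuration that yields the required Markov condition. You correctly flag this as ``the technical crux,'' but a proof must supply it, and completing it amounts to proving a standalone lemma --- topological invariance of $J$ for wall-straddling tripartitions --- that is strictly stronger than the theorem. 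The paper's proof is engineered to avoid exactly this: it enlarges $A\to AA'$ and $C\to CE$ using only bulk Markov conditions, uses purity ($K_X\vert\psi_{XY}\rangle=K_Y\vert\psi_{XY}\rangle$) to trade $K_{AA'B}$ and $K_{BCE}$ for $K_{CDE}$ and $K_{AA'D}$ supported on the complementary regions, and then needs only the two wall Markov conditions of Eq.~\eqref{eq:Markov_wall}, both of which follow directly from the wall axiom in its stated form. To make your approach a proof, you would need to carry out the deferred verifications and show they survive the restriction on where the axioms apply.
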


\begin{proof}
	Proving the statement $J^P=J^Q$ is equivalent to proving the statement that $J(A,B,C)_{\sigma}=0$ for the partition shown in Fig.~\ref{fig:passing}(a). This is because $J(A,B,C)$ is additive under a tensor product; see Eq.~(\ref{eq:additive}). (Here the relevant tensor product is the tensor product structure of $\sigma$ on a disk in the $P$ phase and its ``mirror image" in the $Q$ phase.) By the axiom on the bottom left corner of Fig.~\ref{fig:assumptions} (also known as {\bf A0} in Ref.~\cite{SKK2019}), any bulk region within $P$ and its counterpart in $Q$ are decoupled from each other.
	
	Let us consider a reference state $\sigma$ on a large enough disk containing $AA'BCD$ and an extra layer around it. Let $\vert \psi_{AA'BCDE}\rangle$ be the purification of the reference state; here $E$ is the complement of $AA'BCD$ (with the purifying space included). It follows from the bulk-axiom {\bf A1} that
	\begin{equation}
		I(B:E\vert C)_{\vert \psi\rangle }= I(B:A'\vert A)_{\vert \psi\rangle }=0.
	\end{equation}
	Using the identity $K_{BCE} = K_{BC} + K_{CE} - K_C$, we obtain $J(A,B,C)_{\vert\psi\rangle}= J(A,B,CE)_{\vert\psi\rangle}$. Similarly, using $K_{AA'B}=K_{AA'} + K_{AB} - K_A$, we obtain $J(A,B,C)_{\vert\psi\rangle}=J(AA',B,CE)_{\vert\psi\rangle}$.
	Next, we deform the operators according to $K_X \vert \psi_{XY}\rangle = K_Y \vert \psi_{XY} \rangle$. We find that
	\begin{equation}
		\langle [K_{AA'B},K_{BCE}]\rangle = \langle [K_{CDE},K_{AA'D}] \rangle.
	\end{equation}
	We further divide $D$ into $D_1D_2$, as is shown in Fig.~\ref{fig:passing}(b). From the wall-axiom {\bf A1}, one derives
	\begin{equation}
		I(D_2:CE\vert D_1)_{\vert \psi\rangle}=I(D_1: AA' \vert D_2)_{\vert \psi\rangle}=0 \label{eq:Markov_wall}
	\end{equation} 
	It follows that 
	\begin{equation}
		\begin{aligned}
			&\langle [K_{CDE},K_{AA'D}] \rangle \\
			&= \langle[K_{CED_1} + K_{D_1D_2} -K_{D_1}, K_{AA'D_2} + K_{D_1D_2} -K_{D_2}] \rangle\\
			&= \langle [K_{CE D_1}, K_{D_1D_2}] + [K_{D_1D_2}, K_{AA'D_2}] \rangle\\
			&=0.
		\end{aligned}
	\end{equation}
	The second line follows from the rewriting of the modular Hamiltonian using the Markov chain conditions Eq.~(\ref{eq:Markov_wall}). The third line is arrived after dropping non-overlapping terms and applying $\langle [K_{A}, K_{AB}]\rangle=0$. The last line follows from the vanishing of modular commutators due to the Markov chain condition Eq.~(\ref{eq:Markov_wall}).

	In conclusion, $J(A,B,C)_{\sigma}=0$ for the partition in Fig.~\ref{fig:passing}, which implies that $J^Q=J^P$.
\end{proof}

	To see the key point of the proof, we encourage the readers to repeat the proof for a special topology explained below, for which the topology of the regions may be easier to grasp. Suppose that the reference state is on a sphere and it is pure, denoted as $\vert \psi\rangle$. The northern hemisphere ($P$ phase) and the southern hemisphere ($Q$ phase) are separated by a domain wall at the equator. 
	
	We can formally view the sphere as a double-layered disk, with the two layers connected along the boundary of the disk. (The northern and the southern hemisphere can be thought as the first and the second layer, respectively.) This way, we can view the state on the sphere as a state on a disk with a gapped boundary. We want to show that $J(A,B,C)_{\vert \psi\rangle}=0$ for the partition of this disk shown in Fig.~\ref{fig:gapped_boundary_disk}.	
	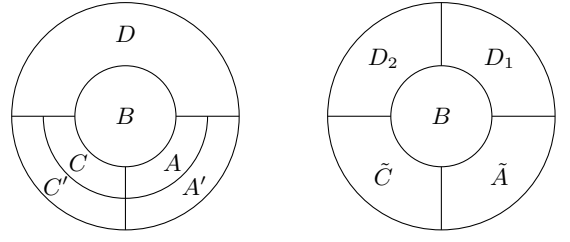
\begin{figure}[h]
		\centering
		\begin{tikzpicture}
			\begin{scope}[scale=0.84]
				\draw[] (0,0) circle (1.8);
				\draw[] (0,0) circle (0.8); 
				\draw[] (0:0.8) -- (0:1.8); 
				\draw[] (-90:0.8) -- (-90:1.8); 
				\draw[] (180:0.8) -- (180:1.8); 
				\draw[] (-180:1.3) arc (-180:0:1.3);
				
				\node at (0:0) {$B$};
				\node at (-45:1.05) {$A$};
				\node at (-45:1.55) {$A'$};
				\node at (-135:1.05) {$C$};
				\node at (-135:1.55) {$C'$};
				\node at (90:1.3) {$D$};
			\end{scope}	
			
			\begin{scope}[scale=0.84, xshift=5 cm]
				\begin{scope}[scale=1]
					\draw[] (0,0) circle (1.8);
					\draw[] (0,0) circle (0.8); 
					\draw[] (0:0.8) -- (0:1.8); 
					\draw[] (-90:0.8) -- (-90:1.8); 
					\draw[] (180:0.8) -- (180:1.8); 
					\draw[] (90:0.8) -- (90:1.8);
					
					\node at (0:0) {$B$};
					\node at (-45:1.3) {$\tilde{A}$};
					\node at (-135:1.3) {$\tilde{C}$};
					\node at (45:1.3) {$D_1$};
					\node at (135:1.3) {$D_2$};
				\end{scope}	
			\end{scope}	
		\end{tikzpicture}
		\caption{A physical system on a disk with a gapped boundary and the partitions of the disk. $\tilde{A}$ on the right is $AA'$ on the left; $\tilde{C}$ on the right is $CC'$ on the left; $D_1D_2$ on the right is $D$ on the left.}
		\label{fig:gapped_boundary_disk}
	\end{figure}

    It is easy to see that the regions in Fig.~\ref{fig:gapped_boundary_disk} are identical to those in Fig.~\ref{fig:passing} once we relabel $C'$ as $E$. Every step of the proof goes through.
    
	In particular, $J(\tilde{A},B,\tilde{C})_{\vert \psi \rangle }=J(A,B,C)_{\vert \psi\rangle}$ due to $I(B:A'\vert A)_{\vert \psi\rangle}=I(B:C'\vert C)_{\vert \psi\rangle}=0$. Then
	\begin{equation}
		\begin{aligned}
		&	\langle [K_{\tilde{A}B}, K_{B\tilde{C}}]\rangle_{\vert \psi\rangle} \\
		&= \langle [K_{\tilde{C}D}, K_{\tilde{A}D}] \rangle_{\vert \psi\rangle}\\
		&= \langle[K_{\tilde{C}D_2}+K_{D_1D_2 } - K_{D_2}, K_{\tilde{A}D_1}+K_{D_1D_2} - K_{D_1} ] \rangle_{\vert \psi\rangle}\\
		&=0.
		\end{aligned}	
	\end{equation}
    Thus $J(A,B,C)_{\vert \psi\rangle}=0$ for this gapped boundary problem. This implies that $J^Q=J^P$ under the aforementioned special topology.

The proof making use of Fig.~\ref{fig:passing} is based on the same idea but it is more flexible: it does not put any requirement of the global topology of the physical system and there is no need to assume a pure reference state.
	
\begin{corollary}\label{wall_coro}
	The modular commutator, for the partition in Fig.~\ref{fig:abc}, has $J(A,B,C)=0$ for any 2D reference state that admits a gapped boundary separating it and the vacuum.
\end{corollary}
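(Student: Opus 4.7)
The plan is to reduce the corollary to Theorem~\ref{wall_thm} by treating the gapped boundary as a gapped domain wall between the given phase $P$ and the trivial (vacuum) phase. Conceptually, a boundary separating a phase from the vacuum is just a domain wall in which one side happens to be the trivial phase, so the hypotheses of Theorem~\ref{wall_thm} should be satisfied once we extend the reference state by the vacuum on the other side.

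Concretely, I would proceed in three steps. First, I would formally construct a two-sided setup: on one side we have the given reference state $\sigma$ (in phase $P$), and on the other side we append a product state representing the vacuum (phase $Q$). The gapped boundary assumption guarantees that the bulk-axiom $\mathbf{A1}$ holds for the reference state in $P$, the vacuum trivially satisfies all entanglement bootstrap axioms, and the gapped-boundary hypothesis provides the wall-axiom $\mathbf{A1}$ across the interface. Thus we have the setup required for Theorem~\ref{wall_thm}. Second, I would apply Theorem~\ref{wall_thm} to conclude $J^P = J^Q$. Third, I would evaluate $J^Q$ directly: since the vacuum is a product state, the modular Hamiltonians on any subsystem factorize as $K_{AB} = K_A + K_B$, and since operators on disjoint subsystems commute, we get $[K_{AB}, K_{BC}] = [K_A + K_B, K_B + K_C] = 0$, so $J^Q = 0$. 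Combining these gives $J(A,B,C)_{\sigma} = J^P = 0$.

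The main obstacle is essentially conceptual rather than technical: one must verify that the gapped boundary assumption is precisely what is needed to invoke Theorem~\ref{wall_thm} with the vacuum as the second phase. In particular, the axiom $\mathbf{A0}$ (decoupling between the two sides) used in the proof of Theorem~\ref{wall_thm} is automatic here because the vacuum is completely decoupled from everything. Once this identification is made, the proof of Theorem~\ref{wall_thm} applies verbatim, and the corollary follows immediately from the triviality of $J$ on a product state.
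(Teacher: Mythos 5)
Your proposal is correct and matches the paper's intended argument: the corollary is stated as an immediate consequence of Theorem~\ref{wall_thm} by taking the second phase $Q$ to be the vacuum, for which $J^Q=0$ because a product state has commuting modular Hamiltonians (as noted in Section~\ref{sec:modular_commutator}). Your filling-in of the details --- that the vacuum trivially satisfies the bootstrap axioms and that the gapped-boundary hypothesis supplies the wall-axioms --- is exactly the identification the paper relies on.
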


The physical implication of theorem~\ref{wall_thm} and corollary~\ref{wall_coro} is that the equality of chiral central charge,
\begin{equation}
	c_-(P) = c_-(Q),
\end{equation} is a necessary condition for the two gapped 2D phases to be separated by a gapped domain wall. (At least, this has to be true if the phases can be described by the entanglement bootstrap approach.) Thus, the chiral central charge for a gapped phase describable by the entanglement bootstrap approach must have $c_-=0$ if it admits a gapped boundary.

It is worth noting that $c_-=0$ is a necessary condition for a gapped boundary to exist. However, it is not a sufficient condition. See \cite{Levin2013} for an example. In a broader context, these obstructions come from the \emph{higher central charges} \cite{Ng2018,Ng2020,Kaidi2021}. It remains an interesting question whether it is possible to determine a necessary and sufficient condition for gapped boundary and domain walls to exist just from a ground state wave function.

\section{Numerical evidence}
\label{sec:numerical_evidence}
In this Section, we provide a numerical evidence for our formula. We consider a  model ground state wave function of an interacting quantum spin system on a lattice, whose underlying phase is known and well-studied. This is a model wave function that describes a chiral topological order with semion statistics, introduced by Nielsen \emph{et al}~\cite{Nielsen2012}. This wave function can be thought as a discrete version of the $\nu=\frac{1}{2}$ bosonic Laughlin state~\cite{Laughlin1983,Kalmeyer1987}.

While our focus lies in the chiral central charge, we also study the total quantum dimension $\mathcal{D}$ for comparison. That the latter quantity can be extracted from a topologically ordered ground state wave function is well-known~\cite{Kitaev2006,Levin2006}. For our system of interest, $c_-=1$ and $\mathcal{D}=\sqrt{2}$, leading to 
\begin{equation}
\begin{aligned}
J &= \pi/3 \approx 1.047,\\
\gamma &= \ln \sqrt{2}\approx 0.347.
\end{aligned}
\end{equation}
Here $\gamma$ is the  topological entanglement entropy~\eqref{eq:teedef} calculated using the partition in Fig.~\ref{fig:abc}.

Here are the essential details of the wave function \cite{Nielsen2012} relevant to our calculation. Consider a pair $(N, \{z_j\}_{j=1}^N)$, where $N$ is an even number and $z_j\in \mathbb{C}$; all $z_j$ are distinct but otherwise arbitrary. For each position $z_j$ we assign a qubit; the total Hilbert space, being a tensor product of these $2$-dimensional ones, is $2^N$-dimensional. 
The (un-normalized) wave function is defined as 
\begin{equation}
\vert \Psi(N, \{z_j\}_{j=1}^N) \rangle = \sum_{\{s_i\}_{i=1}^N} c(\{s_i\}_{i=1}^N) \vert \{s_i\}_{i=1}^N \rangle.
\end{equation}
Here $s_i= \pm 1$ and the complex coefficients are given by 
\begin{equation}
c(\{s_i\}_{i=1}^N)= \delta_{\mathbf{s}} \prod_{n<m}^N (z_n-z_m)^{\frac{1}{2} s_n s_m},
\end{equation}
where $\delta_{\mathbf{s}}=1$ for $\sum_i s_i =0$ and $\delta_{\mathbf{s}}=0$ otherwise. (Compared to Ref.~\cite{Nielsen2012}, we have chosen $\alpha=\frac{1}{2}$ for semion and a trivial phase factor of $\chi_{p,s_p}=1$.)

In order to carry out the calculation of $J$ and $\gamma$, we put the model wave function on a sphere.\footnote{We chose the sphere geometry to minimize the finite-size effects. In principle, one could choose a regular lattice on a plane. However, the number of data points that can be obtained this way is too small to analyze.} This can be done via the stereographic projection. Namely, we assign $(\theta_j, \phi_j)$ to $z_j$ according to
\begin{equation}
z_j= \frac{\sin\theta_j}{1+ \cos\theta_j} \exp\left(\phi_j\right),
\end{equation}
where $\theta_j\in [0,\pi]$ and $\phi_j\in [0,2\pi)$. The corresponding unit vector of in 3-dimensional Euclidean space is $\vec{n}_j= (\sin\theta_j \cos\phi_j, \sin\theta_j \sin\phi_j,\cos\theta_j)$. We consider an ``evenly" distributed set of points and take $N\to \infty$ limit via extrapolation.

Concretely, our lattice is defined in terms of the golden angle $\varphi\equiv  2\pi \cdot \left( 1-\frac{\sqrt{5}-1}{2} \right) \approx 137.5^{\circ}$~\cite{naylor2002golden,vogel1979better}. We define a class of lattice $\Lambda(N,\varphi)$ on a unit sphere, for integer $N\ge 2$, according to the coordinates of its lattice sites
\begin{equation}
(\theta_j, \phi_j)= \left( \arccos \left(1-\frac{2(j-1)}{N-1}\right), (j-1)\varphi \right),
\end{equation}
where $j=1,2,\cdots, N$.
See Fig.~\ref{fig:spiral_24} for an illustration of $N=24$. The golden angle automatically makes the points fairly uniform for large $N$.
\begin{figure}[h]
	\includegraphics[scale=0.34]{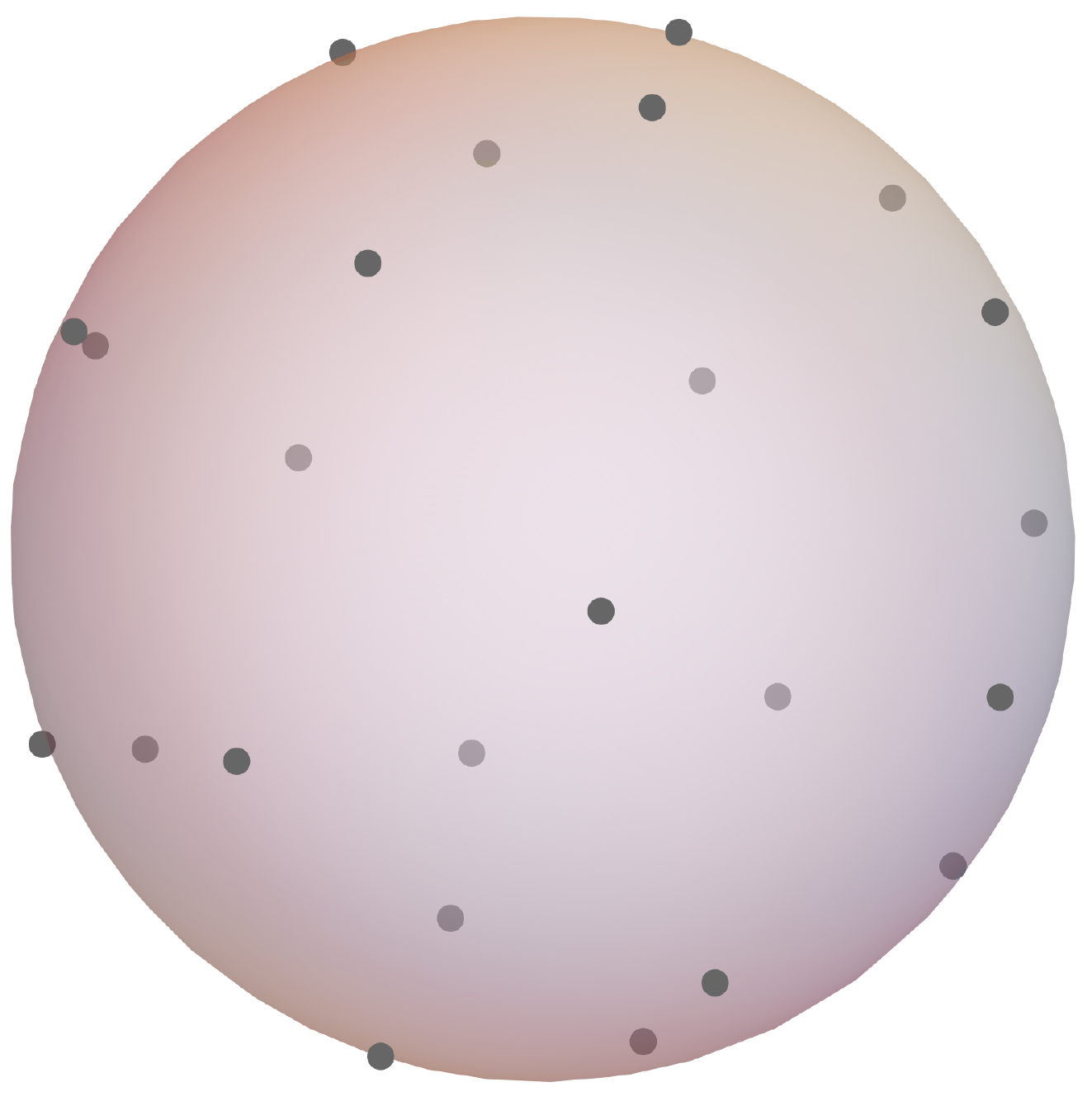}
	\caption{The lattice $\Lambda(24,\varphi)$. The 24 points are distributed fairly even on a sphere.}\label{fig:spiral_24}
\end{figure}

The advantage of our choice of lattice is that one can generate a fairly evenly distributed set of points on a sphere with a modest effort. While these points are not perfectly evenly distributed, the calculated value of $J$ should be insensitive to these microscopic details because our derivation of $J=\frac{\pi}{3}c_-$ did not depend on them. Indeed, our numerical analysis confirms this expectation.

Specifically, we computed the semion wave function for the lattice $\Lambda(N,\varphi)$ for $N\in \{12,14,16,18,20,22,24,26 \}$. For each $N$, we partition the sphere into four subsystems $A,B,C$ and $D$ such that they have a tetrahedral symmetry. Because the points on the sphere are not perfectly evenly distributed, the value of $J$ depends on which set of points belong to which of the subsystems. We thus average over different choices of subsystems over Haar-random rotation of the sphere. This way, we have obtained at least $1000$ randomly generated values of $(J,\gamma)$ for each $N\in \{12,14,16,18,20,22,24,26 \}$. (The largest system size of $N=26$ was limited to $1000$ samples, whereas the other system sizes ranged all the way up to $1.6\times 10^4$.) For the largest size we can probe ($N=26$), we find $\gamma \approx 0.381$ and $J \approx 0.964$, each deviating $11.5\%$ and $8.3\%$ from the theoretically predicted value, respectively. 

Intriguingly, the deviation monotonically decreases as $N$ increases; see Fig.~\ref{fig:fit}. Thus, by making an educated guess on the functional form of the correction, we can extrapolate the function to the $N\to \infty$ limit. We used two different families of ansatz:
\begin{equation}
\begin{aligned}
    f_p(x) &= ax^{-b} + c, \\
    f_e(x) &= ae^{-bx^{d}} + c,
\end{aligned}
\end{equation}
each corresponding to the ``power-law'' and ``exponential'' ansatz.  For $J$, the exponential ansatz provided a significantly better fit than the power-law ansatz. In particular, the value of $d$ in the range of $0.4<d <0.6$ provided comparable results, with the extrapolated value ranging between $1.067 \pm 0.013$ and $1.045\pm 0.009$. (We note that, outside of this range, the quality of the fit degraded significantly. In particular, for $d\geq 0.7$, we could not fit the function to the data due to convergence issue.) In contrast, for $\gamma$, the exponential ansatz did not provide a significantly better fit than the power-law based ansatz. For the exponential ansatz $f_e(x) = ae^{-bx^{1/2}}+ c$, the extrapolated value was $0.112\pm 0.062$, a significant deviation from $\ln \sqrt{2}\approx 0.347$. The best fit was produced by choosing $d=1.5$, yielding $0.3384\pm 0.0004$. For the power law ansatz, for $b=0.5$ and $b=1$, the extrapolated values were  $0.318\pm 0.005$ and $0.217\pm 0.006$, respectively. Thus, compared to the topological entanglement entropy, extrapolated value of $J$ is significantly more consistent with our theoretical prediction.

\begin{figure}[h]
	\includegraphics[width=0.95\columnwidth]{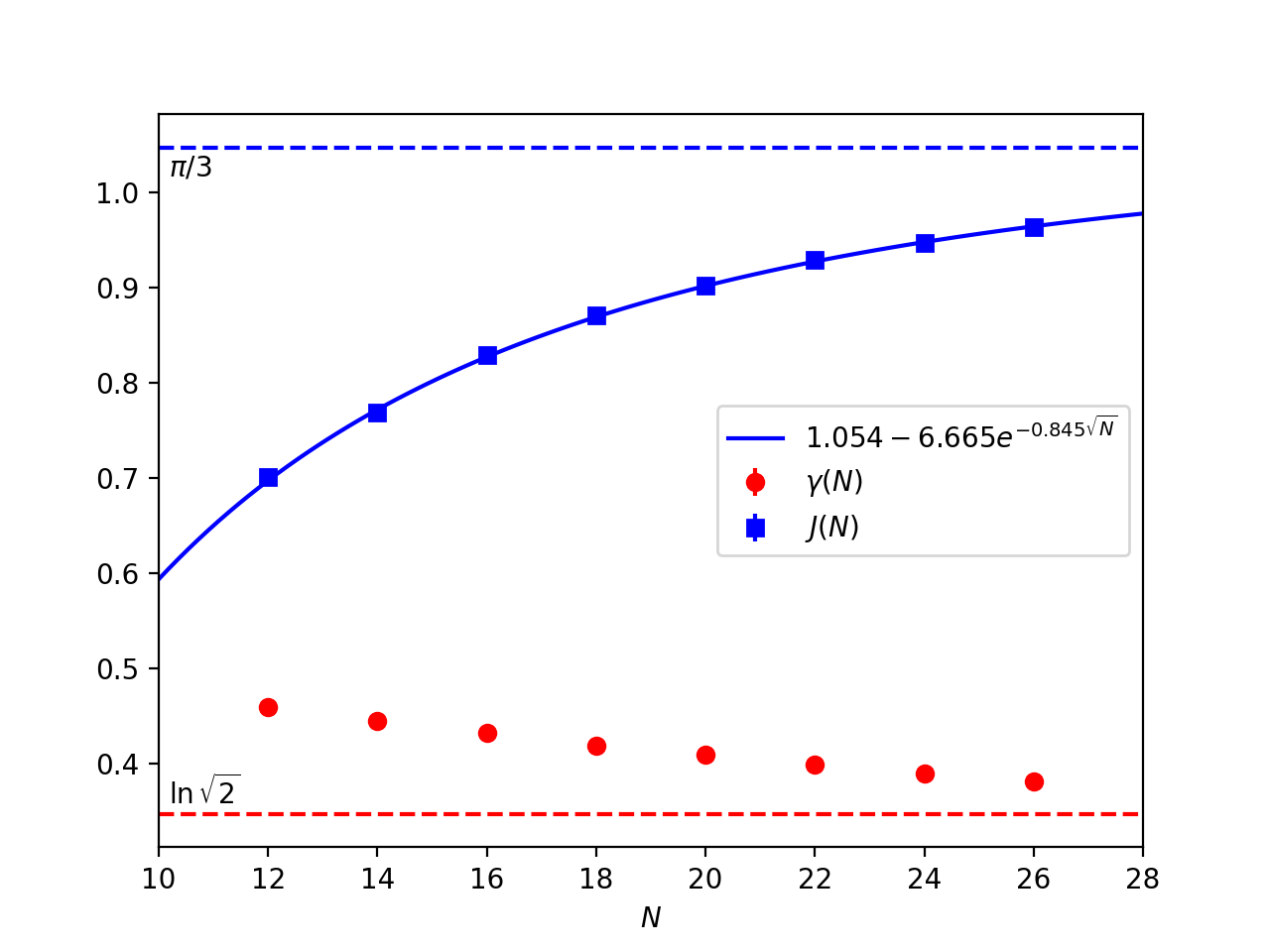}
	\includegraphics[width=0.95\columnwidth]{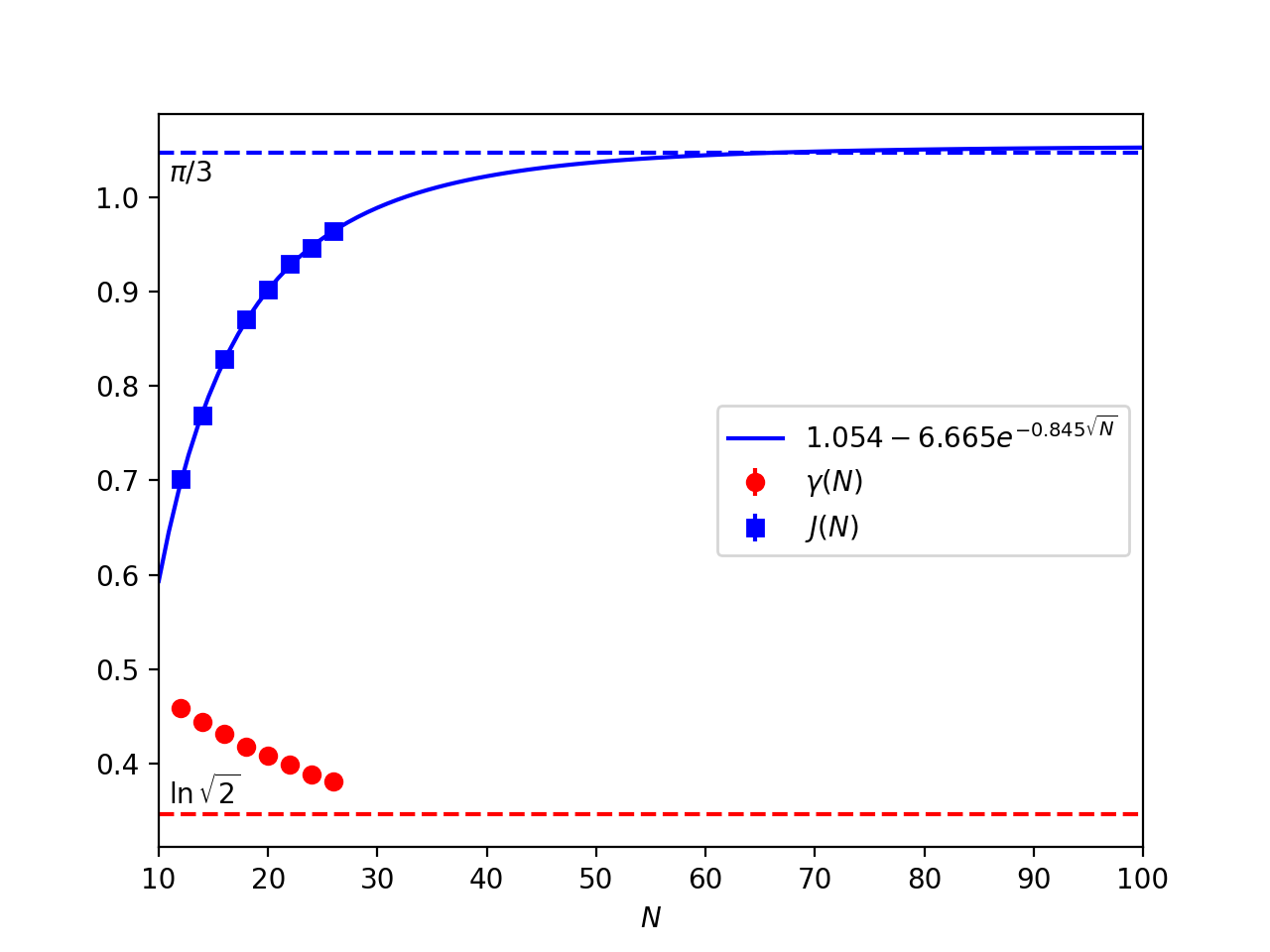}
	\caption{Mean values of $J(N)$ (blue) and $\gamma(N)$ (red). Both figures represent the same data, but the second figure plots the fitted function over larger $N$. Dashed lines represent the theoretically predicted values of $J=\pi/3$ and $\gamma=\ln \sqrt{2}$.}\label{fig:fit}
\end{figure}

While the precise dependence of $J$ on system size $N$ is unknown at this point, we can make an educated guess, based on the analysis above and the following physical argument. In the large $N$ limit, we expect the correlation length to be $\xi \propto 1/\sqrt{N}$, in the unit of the radius of the sphere. If the correction term decays exponentially with the inverse of the correlation length, we can expect the correction to decay exponentially in $N^{0.5}$. This meshes well with the fact that the fit for the exponential ansatz performed well in the $0.4<b<0.6$ range. Thus, we propose the following ansatz for $J$ for finite $N$:
\begin{equation}
J(N) =  a \exp\left(-b\, N^{0.5}\right)+ J(+\infty).
\end{equation} 
From this ansatz, we obtain the best fit $J(N)=1.054-6.665 \,\exp({-0.845\, N^{0.5}})$, from which we conclude that
\begin{equation}
J(+\infty)=1.054 \pm 0.013.
\end{equation}
The difference between the numerically estimated value and the predicted value of $J$ is thus less than the standard deviation, supporting our formula.

\section{Discussion}
\label{sec:discussion}

The main contribution of our manuscript is twofold. First, we have elucidated a general relationship between the modular commutator (Eq.~\eqref{eq:mod_comm_definition}), conditional mutual information, (bosonic) time reversal, modular flow. The modular commutator is a single quantity that possesses an intimate relationship with these seemingly unrelated objects, and this relationship calls for a further study of modular commutator in other physical systems.

Secondly, we provided nontrivial evidence for our formula for the chiral central charge, \emph{i.e.,} Eq.~\eqref{eq:main_result}. There were two main pieces of evidence, one being analytical and the other being numerical. The analytical evidence is the proof (in the framework of the entanglement bootstrap program~\cite{SKK2019,Shi2021}) that the formula yields the same value if two bulk phases are joined via a gapped domain wall. This result is consistent with the physical intuition that a gapped domain wall can exist only if the two bulk phases separated by the domain wall have the same chiral central charge. The numerical evidence was based on an exact calculation and an extrapolation based on model wave functions describing the semion model~\cite{Nielsen2012}. This model wave function has a chiral central charge of $c_-=1$, and our numerical study yields (upon extrapolation) a result consistent with $\approx 0.7\%$ error. These results add credence to the claim that the modular commutator is related to the chiral central charge for gapped quantum many-body systems in two spatial dimensions.

Below, we list natural questions that arise from our work, which we leave as open problems. 
\begin{enumerate}
\item Can one prove the quantization of the chiral central charge from the axioms of entanglement bootstrap~\cite{SKK2019}? In Section~\ref{sec:modular_current}, we only used one of the two entanglement bootstrap axioms (axiom {\bf A1}) and a physical argument to arrive at our formula. Perhaps, to make further progress on this problem, axiom {\bf A0} may play a role. 
\item Can one deduce symmetry-protected topological invaraints (\emph{e.g.,} ~\cite{Shiozaki2017,Shapourian2017,Shiozaki2018,Dehghani2021,Cian2021}) of the phase from the modular Hamiltonian? 
\item It will be interesting to calculate the modular commutator in other systems that have a chiral nature. Notable candidates are: (1) the critical point of the quantum phase transition between a pair of gapped chiral phases; (2) gapless systems (with a non-quantized) thermal Hall response, e.g., those proposed to be described by a chiral spin liquid with spinon Fermi surfaces \cite{Gong2019,Teng2020}; and (3) free-fermion systems, which may admit a simple form of the quantity \cite{Fidkowski2010}. 
Another interesting example is the Standard model of particle physics, which is a chiral theory; precisely speaking, it violates both parity $P$ \cite{wu1957experimental,lee1956question} and time reversal $T$ (which is better known as $CP$ violation) \cite{christenson1964evidence,cabibbo1963unitary,kobayashi1973cp}.
\item Is the modular commutator UV-finite in quantum field theory in general? In our manuscript, the UV-finiteness of the  modular commutator follows from the entanglement area law and its relation to the energy current at the edge. This line of reasoning will clearly not apply, for instance, if there is a correction to the entanglement entropy that scales logarithmically with the subsystem size~\cite{Calabrese2004,Fradkin2006,Metlitski2011,Casini2011}. 
\item It will be interesting to understand if the modular commutator has any bearing in quantum information theory. Clearly, the modular commutator quantifies something quantum; it is trivially zero for classical states because their reduced density matrices commute. However, it is unclear if the modular commutator has any operational interpretation in quantum information theory.
\item We showed that the modular commutator $J$ vanishes for states whose conditional mutual information vanishes on the three regions used to define $J$. We leave open the problem of proving a robust version of that statement that holds for states with nearly zero conditional mutual information, as is often the case for physical systems.
\item Chiral objects appear in nature in diverse forms, from large objects such as seashells and pinecones to their constituent molecules~\cite{cintas2013biochirality}. Chirality also has found its importance in synthetic chemicals, and the catalyzes that are needed to produce them~\cite{nguyen2006chiral}. These chiral molecules are examples of few-body quantum systems that possess chirality. (As objects with finite size, the thermal state can be a superposition of both chiralities, but the time it takes to relax to that equilibrium state can be longer compared with any practical time scales.) It is interesting to ask if the modular commutator or a suitable generalization can be useful to characterize the chirality of few-body quantum systems. 
\end{enumerate}

\subsection*{Acknowledgments:}
The authors thank Nikita Sopenko, Michael Gullans, and 
Maissam Barkeshli for valuable input and inspiring discussions. I.K. was supported by the Australian Research Council via the Centre of Excellence in Engineered Quantum Systems (EQUS) project number CE170100009. B.S. would like to thank Xuejun Guo for his intriguing lecture on Fibonacci numbers, the golden angle, and plants in the lecture series ``Ubiquitous Mathematics", the content of which can still be vividly recalled after years. B.S. further thank John McGreevy for kindly sharing the computer workstation in his office, from which the majority of data presented in this work is collected. B.S. is supported by the University of California Laboratory Fees Research Program, grant LFR-20-653926, and the Simons Collaboration on Ultra-Quantum Matter, grant 651440 from the Simons Foundation. K.K. is supported by MEXT Quantum Leap Flagship Program (MEXT Q-LEAP) Grant Number JP-MXS0120319794. Contributions to this work by NIST, an agency of the US government, are not subject to US copyright. Any mention of commercial products does not indicate endorsement by NIST. V.V.A. thanks Olga Albert, Halina and Ryhor Kandratsenia, as well as Tatyana and Thomas Albert for providing daycare support throughout this work.

\bibliography{bib}

\appendix

\section{Modular response of entanglement entropy}\label{sec:appSJ}

In this Section, we show that modular commutator is the response of the entanglement entropy under a modular flow. Specifically, 
\begin{equation}
	\frac{d}{ds} S(\rho_{AB}(s))|_{s=0} = J(A,B,C)_{\rho}. \label{eq:response_appendix}
\end{equation}
for $\rho_{ABC}(s) = e^{i K_{BC}s} \rho_{ABC} e^{-iK_{BC}s}.$ Below, we derive this identity for positive definite density matrix $\rho_{ABC}>0$.\footnote{Any density matrix can be made positive definite by adding a tiny perturbation. Specifically, we can take $\rho \to \rho(1-\epsilon) + \epsilon \sigma$, where $\sigma$ is a full-rank state,  and then take the $\epsilon \to 0$ limit.}
To derive this identity, note that, for general one-parameter family of density matrices $\rho(s)$, 
\begin{widetext}
	\begin{equation}
		\frac{d}{ds}S(\rho(s))|_{s=0} =-\text{Tr}\left(\frac{d\rho(s)}{ds} \ln \rho(s)\right)|_{s=0}  - \text{Tr}\left(\rho(s) \frac{d}{ds}\ln \rho(s)\right)|_{s=0}.
	\end{equation} 
\end{widetext}
By Duhamel's formula, (suppressing the explicit $s$-dependence)
\begin{equation}
	\begin{aligned}
		\frac{d\rho}{ds} &= \frac{d}{ds} e^{\ln \rho} \\
		&= \int_0^1 \rho^t \left(\frac{d}{ds}\ln \rho \right) \rho^{1-t}dt.
	\end{aligned}
\end{equation}
Dividing both sides by $\rho^{\frac{1}{2}}$, both on the left and the right, we obtain
\begin{equation}
	\begin{aligned}
		\rho^{-\frac{1}{2}} \frac{d \rho}{ds} \rho^{-\frac{1}{2}} &= \int_{-\frac{1}{2}}^{\frac{1}{2}} \rho^t \left(\frac{d}{ds} \ln \rho \right) \rho^{-t} dt \\
		&=\sum_{i, j}\left(\frac{d}{ds} \ln \rho\right)_{ij} |i\rangle\langle j| \frac{\sinh \left( \frac{\ln \lambda_i-\ln \lambda_j}{2}\right)}{\frac{\ln\lambda_i-\ln\lambda_j}{2}},
	\end{aligned}
\end{equation}
where $\{ |i\rangle \}$ and $\{ \lambda_i \}$ are the set of eigenstates and the eigenvalues of $\rho(s)$, respectively. Thus, we obtain
\begin{equation}
	\frac{d\ln \rho}{ds} = \Phi\left[ \rho^{-\frac{1}{2}} \frac{d\rho}{ds} \rho^{-\frac{1}{2}}\right],
\end{equation}
where $\Phi$ is a linear operator defined as 
\begin{equation}
	\Phi[O] = \sum_{i,j} O_{ij} |i\rangle \langle j| \frac{ \frac{\ln \lambda_i - \ln \lambda_j}{2}}{\sinh\left(\frac{\ln\lambda_i - \ln\lambda_j}{2} \right)}
\end{equation}
for any operator $O$. Moreover, the dual of $\Phi$ (denoted as $\Phi^{\dagger}$) with respect to the Hilbert-Schmidt inner product satisfies $\Phi^{\dagger}[\rho] =\rho$. Thus, 
\begin{equation}
	\begin{aligned}
		\text{Tr}\left(\rho \frac{d}{ds}\ln \rho \right) &= \text{Tr} \left(\rho \rho^{-\frac{1}{2}} \frac{d\rho}{ds} \rho^{-\frac{1}{2}}\right) \\
		&= \text{Tr}\left(\frac{d\rho}{ds}\right) \\
		&= 0.
	\end{aligned}
\end{equation}

Thus, we conclude that
\begin{equation}
	\frac{d}{ds}S(\rho)|_{s=0} = -\text{Tr}\left(\frac{d\rho}{ds} \ln \rho \right)|_{s=0}.
\end{equation}
Plugging in $\rho_{ABC}(s) = e^{i K_{BC}s} \rho_{ABC} e^{-iK_{BC}s}$, we obtain Eq.~\eqref{eq:response_appendix}.


Below is an alternative way to see $\text{Tr}\left(\rho \frac{d}{ds}\ln \rho \right) =0.$ First, this is true when $\rho(s)$ is classical, namely when $\rho(s)=\sum_i \lambda_i\vert i\rangle \langle i\vert$, where the orthonormal basis $\{\vert i\rangle \}$ is independent of $s$; this follows from $\text{Tr}\left(\rho \frac{d}{ds}\ln \rho \right)= \sum_i \lambda_i \frac{d}{ds} \ln \lambda_i= \sum_i \frac{d}{ds} \lambda_i = \frac{d}{ds}\text{Tr}\rho=0$.

For the quantum case, the basis can change as well. For small $s$, in a proper choice of basis\footnote{When the eigenvalues are nondegenerate, the basis is the diagonal basis. For degenerate case choose one that works.} we have
\begin{equation}
	\begin{aligned}
		\rho&=U\left(\sum_i \lambda_i \vert i\rangle \langle i\vert \right) U^{\dagger}.\\
		\ln\rho&=U\left(\sum_i \ln\lambda_i \vert i\rangle \langle i\vert \right) U^{\dagger}.\\	
	\end{aligned}	
\end{equation}
Here, the unitary operator $U$ and the eigenvalues $\{\lambda_i\}$ depend on $s$, whereas the orthonormal basis $\{ \vert i\rangle\}$ does not.
Therefore
\begin{equation}
	\begin{aligned}
		\text{Tr}\left(\rho \frac{d}{ds}\ln \rho \right) &= \text{Tr} \left(\rho\frac{dU}{ds}U^{\dagger}\ln\rho + \rho\ln\rho \, U\frac{dU^{\dagger}}{ds} \right) \\
		&+ \sum_i \lambda_i \frac{d\ln\lambda_i}{ds}\\
		&= 0 .
	\end{aligned}
\end{equation}
We have used  $\frac{dU}{ds}U^{\dagger}+U\frac{dU^{\dagger}}{ds}=0$, which follows from unitarity.

\section{Numerical Method}
In this Section, we discuss a numerical method to compute $J(A,B,C)$, given access to the full many-body wave function. Note that the modular commutator can be expressed as 
\begin{equation}
    J(A,B,C) = 2\text{Im}(\langle \widetilde{\psi}_{AB} |\widetilde{\psi}_{BC}\rangle),
\end{equation}
where $|\widetilde{\psi}_X\rangle = K_X |\psi\rangle$. Thus, calculation of $J(A,B,C)$ can be reduced to the calculation of $|\widetilde{\psi}_X\rangle$.

While this calculation can be performed by computing the reduced density matrix over $X$, taking the logarithm, tensor with the identity on the complement of $X$, and then applying this operator to the state $|\psi\rangle$, there is a more efficient approach. Given $|\psi\rangle$, we can always perform a Schmidt decomposition over the partition of the system into $X$ and its complement. Denoting the complement of $X$ as $\bar{X}$, we get
\begin{equation}
    |\psi\rangle_{X\bar{X}} = \sum_{n} \sqrt{p}_n |\phi_n\rangle_X |\psi_n\rangle_{\bar{X}},
\end{equation}
where $\{p_n: p_n\geq 0, \sum_n p_n=1 \}$ is a set of Schmidt values. Moreover, one can verify that
\begin{equation}
    |\widetilde{\psi}_X\rangle = - \sum_{n} \sqrt{p}_n \ln p_n |\phi_n\rangle_X |\psi_n\rangle_{\bar{X}}.
\end{equation}
Therefore, to compute $|\widetilde{\psi}_X\rangle$, it suffices to obtain a Schmidt decomposition of $|\psi\rangle_{X\bar{X}}$.

The Schmidt decomposition can be obtained by first mapping the $|\psi\rangle_{X\bar{X}}$ to a matrix and calculating the singular value decomposition of this matrix. Without loss of generality, suppose $|\psi\rangle_{X\bar{X}} = \sum_{i,j} \psi_{ij} |i\rangle_X|j\rangle_{\bar{X}}$, where $\{|i\rangle_X \}$ and $\{|j\rangle_{\bar{X}} \}$ are orthonormal basis sets for the Hilbert space over $X$ and $\bar{X}$ respectively. One can simply define a matrix $\sum_{i,j} \psi_{ij} |i\rangle_X\langle j|$ and compute its singular value decomposition. Now the calculation of $|\widetilde{\psi}_{X\bar{X}}\rangle$ is straigthforward. One can simply change each singular value $\lambda $ to $-2\lambda \ln \lambda$. Mapping the resulting operator back to a state, we obtain $|\widetilde{\psi}_{X}\rangle$.

\end{document}